\theoremstyle{plain}
\newtheorem{theorem}{Theorem}
\newtheorem{lemma}[theorem]{Lemma}
\newtheorem{remark}{Remark}
\theoremstyle{definition}
\newtheorem{result}{result}
\theoremstyle{definition}
\newsavebox{\@brx}
\newcommand{\llangle}[1][]{\savebox{\@brx}{\(\m@th{#1\langle}\)}%
  \mathopen{\copy\@brx\mkern2mu\kern-0.9\wd\@brx\usebox{\@brx}}}
\newcommand{\rrangle}[1][]{\savebox{\@brx}{\(\m@th{#1\rangle}\)}%
  \mathclose{\copy\@brx\mkern2mu\kern-0.9\wd\@brx\usebox{\@brx}}}
\begin{document}

\title{Quantum unital Otto heat engines: using Kirkwood-Dirac quasi-probability for the engine's coherence to stay alive}
\author{Abdelkader El Makouri }
\email{Corresponding author: abdelkader\_elmakouri@um5.ac.ma}\affiliation{LPHE-Modeling and Simulation, Faculty of Sciences, Mohammed V University in Rabat, Morocco.} 
\author{Abdallah Slaoui }\affiliation{LPHE-Modeling and Simulation, Faculty of Sciences, Mohammed V University in Rabat, Morocco.}\affiliation{Centre of Physics and Mathematics, CPM, Faculty of Sciences, Mohammed V University in Rabat, Rabat, Morocco.}
\author{Rachid Ahl Laamara}\affiliation{LPHE-Modeling and Simulation, Faculty of Sciences, Mohammed V University in Rabat, Morocco.}\affiliation{Centre of Physics and Mathematics, CPM, Faculty of Sciences, Mohammed V University in Rabat, Rabat, Morocco.}

\begin{abstract}
In this work, we consider \textit{quantum unital Otto heat engines}. The latter refers to the fact that both the unitaries of the adiabatic strokes and the source of the heat provided to the engine preserve the maximally mixed state. We show how to compute the cumulants of either the dephased or undephased engine. For a qubit, we give the analytical expressions of the averages and variances for arbitrary unitaries and unital channels. We do a detailed comparative study between the dephased and undephased heat engines. More precisely, we focus on the effect of the parameters on the average work and its reliability and efficiency. As a case study of unital channels, we consider a quantum projective measurement. We show on which basis we should projectively measure the qubit, either the dephased or undephased heat engine, to extract higher amounts of work, increase the latter's reliability, and increase efficiency. Further, we show that non-adiabatic transitions \textit{are not always detrimental} to thermodynamic quantities. Our results, we believe, are important for heat engines fueled by \textit{quantum measurement}.\par

$\mathbf{Keywords:}$ Cumulants of the unmonitored engine, Kirkwood-Dirac quasi-probability, Measurement-based quantum engine, Universal thermodynamic bounds.
\end{abstract}
\date{\today}

\maketitle
\section{Introduction}
\textit{Quantum mechanics} and \textit{thermodynamics} are two of the best theories that humankind has developed. With today's ability of experimentalists to control quantum systems—something that one even could not imagine at the time of Schrödinger—one starts wondering if and how these two great theories can be fit together into a single framework. Thermodynamics has emerged as a consequence of people's interest in understanding heat engines. More precisely to answer the question of how one can use \textit{heat} and efficiently convert it into useful energy, i.e., \textit{work}. Nowadays, our technologies are getting smaller, so we want to know how we can manage heat at the small scale, e.g., can heat generated at the quantum scale be used advantageously for useful things such as cooling quantum systems? These kinds of questions have led to the field known today as $\textit{quantum thermodynamics}$\cite{Scully,Anders,Huber,Binder,Deffner,Myers1,Nicole, Matteo}. \par

Inspired by $\textit{Maxwell demon}$ and $\textit{Szilard engine}$ \cite{Maxwell}, quantum thermal machines based on \textit{quantum measurement} \cite{Kim,Das,Elouard,Buffoni,Anka,Behzadi,Prasanna1, KimT,Sagawa,Jordan,Chand,Lisboa,Shanhe, Bresque,JuzarSong,JordanNM,Yamamoto,BiswasP,SantosJ,BhandariJ,Perna,Robert} are now under extensive study. The fact that quantum measurement can fuel quantum thermal machines and do quantum computation, see, e.g., Ref. \cite{Robert}, has no classical analogy. This is because, in principle, a measurement in the classical world can extract information without disturbing the state of the measured system. On the other hand, this is not the case when we consider quantum systems such as electrons and photons—\textit{their state gets changed after the measurement}. It is this change in the state of the system that is responsible for the possibility of thermal machines fueled by quantum measurement. \par

Recently, in Ref. \cite{Abdelkader}, we have considered a single qubit quantum Otto cycle where we neither assumed the cycle to be time-reversal symmetric nor specified the unital channel replacing the hot heat bath. Similarly to some works \cite{Mohanta,Watanabe,Gerry,Sandipan, Gerry1,Saryal,Watanabe2,Gerry2}, we have proved that the ratio of the fluctuations of the stochastic work ($W$) and the stochastic heat absorbed $(Q_{M})$ is \textit{lower and upper bounded}. The lower bound was the square of the efficiency of the engine, and the upper bound was 1. In Ref. \cite{Abdelkader2}, we put forward this work and also considered the fluctuations of the stochastic heat released (\textit{denoted by} \( Q_{C}' \)). In general, there is another stochastic quantity denoted by \( Q_{C} \) that has the same average as \( Q_{C}' \) but not the same higher-order cumulants; see Ref. \cite{Abdelkader2} for more details.

In \cite{Abdelkader2}, we have proved that the relative fluctuations (RFs) of $W$, $Q_{M}$, and $Q_{C}$ obey a thermodynamic uncertainty relation given by $2/\llangle \Sigma\rrangle-1$ (see Ref. \cite{Sacchi} and Refs. \cite{Barato,Timpanaro,Horowitz,Horowitz2,England, Falasco,Brandner,Brandner2,LiuSegal, Proesmans,Hasegawa} for more details about thermodynamic uncertainty relations), where $\llangle \Sigma\rrangle$ is the average of the stochastic entropy production.
 We analytically showed that the RFs of $Q_{C}$ always bound the RFs of $W$ and $Q_{M}$, which is better than the bound $2/\llangle \Sigma\rrangle-1$. The latter has the flaw that it becomes negative when $\llangle\Sigma\rrangle>2$. Actually, a lot of thermodynamic uncertainty relations have the flaw that when entropy production diverges, i.e., \textit{in the low-temperature regime}, they tend to zero. However, we know that even in this regime, thermodynamic quantities such as work and heat have \textit{non zero} relative fluctuations. We show this below, numerically. \par

One should note that because of the \textit{projective measurement} between the strokes in Refs. \cite{Abdelkader,Abdelkader2}, coherence was erased. Thus, one would wonder what is the fate of those bounds and the relationship between the RFs of $W$, $Q_{M}$, and $Q_{C}$ in the presence of \textit{quantum coherence}, and if and how the latter can be used to enhance work and its reliability as well as efficiency. Here, we consider the quantum Otto cycle, where the working medium is a single two-level system, similar to what we did in Refs. \cite{Abdelkader,Abdelkader2}. In this paper, the main things that we focus on are: \par 

\begin{enumerate}
\item We show how to derive the cumulants — \textit{in the presence of quantum coherence} — of all thermodynamic quantities from the characteristic function (CF) of the stochastic energies. In Ref. \cite{Abdelkader2}, we have shown how one can do this for the projectively measured engine. In the presence of coherence, we see that the cumulants follow from a \textit{quasiprobability}.

\item We analytically show how one can derive the first and second cumulants only in terms of six transition probabilities (\textit{to be defined below}), independently of the Hamiltonians, the unitaries, and the unital channel. More precisely, we give the compact expressions of the averages and fluctuations of work and heat.

\item Considering as a unital channel a quantum projective measurement, we give a detailed comparison between the dephased and undephased engines. More precisely, we focus on the effect of the parameters (i.e., inverse temperature $\beta$, gaps $2\nu_{1}$ and $2\nu_{2}$, angles of measurement $\chi$ and $\alpha$, the angle $\phi$, and the non-adiabatic parameter $\delta'$), on the average work and its reliability, and also on efficiency. In the main text below, it will be clear what the difference is between the dephased and undephased engines. In general, it is shown that contrary to common wisdom, non-adiabatic transitions are not always detrimental to thermodynamic quantities such as work and efficiency.

\item We also comment on the relationship between the RFs proved in Refs. \cite{Abdelkader,Abdelkader2} when coherence is not erased.

\end{enumerate}

 This paper has been organized as follows: In Section \ref{1}, we explain what we mean by dephased and undephased engines, and we show how one can derive the cumulants of the latter engines and prove that both of these engines can never work as a \textit{refrigerator}. In Section \ref{casestu}, we give the qubit model to which we apply our analytical results. In Section \ref{dede}, we explain in detail which parameters have a good influence on the average work, work reliability, and efficiency of the dephased engine. We show that their highest values are only achieved in the adiabatic regime. Then, in Section \ref{undeunde}, we focus on the undephased engine. We compute the averages and fluctuations of work and heat for an arbitrary unital channel, then we apply them to the qubit model that we show in Section \ref{casestu}. We also give the common and different features between the two engines. In Section \ref{conc}, we give a summary of our results. Finally, in the appendices, we give our arguments to convince the reader to use Eq. (\ref{chiFF}) to derive the cumulants of the undephased engine as well as the proof of some analytical results presented in the main text. We set $\hbar=k_{B}=1$ throughout this paper. \par

\section{Dephased and undephased Quantum Otto heat engines}
\label{1}

Although there are many different thermodynamic cycles, the quantum cycle that we consider in this paper is the quantum version of the Otto cycle \cite{Johal,Rezek,Quan2,Quan,Kosloff,Kieu}. This is because \textit{heat and work exchanges} are separated under the assumption of \textit{ the weak coupling limit} between the working medium and its environment. Also, when the heat and work exchanges are separated, this helps in considering \textit{their higher cumulants} without troubles i.e., work and heat are distinguished even at the stochastic level. The Otto cycle is composed of two adiabatic strokes and two isochoric strokes. In the adiabatic strokes, we have an exchange of work with the external world, while in the isochoric strokes, we have an exchange of heat.

\begin{figure}[hbtp]
\centering
\includegraphics[scale=0.95]{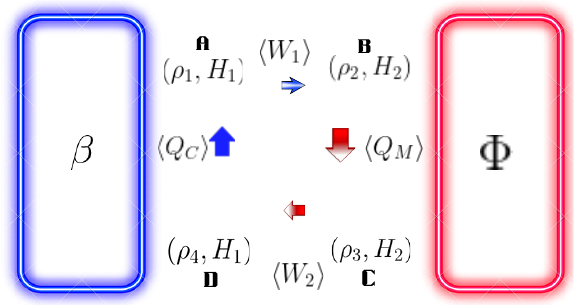}
\caption{Schematic of the four steps of the quantum Otto cycle. The four strokes of the cycle are: $\mathbf{A}\rightarrow \mathbf{B}$, $\mathbf{B}\rightarrow \mathbf{C}$, $\mathbf{C}\rightarrow \mathbf{D}$, and $\mathbf{D}\rightarrow A$. See the main text for details.}\label{Otto}
\end{figure}

\subsection{Undephased engine}\label{ude}

Consider a quantum system with an arbitrary finite dimension $d$.
The cycle steps of the $\textit{undephased engine}$ are as follows; see Fig. \ref{Otto}. $\mathbf{A}:$ First, we assume that the system with Hamiltonian $H_{1}$, is initialized in thermal equilibrium with a heat bath at inverse temperature $\beta$, thus the state is given by $\rho_{1}:=e^{-\beta H_{1}}/Z$. $Z:=\mathrm{Tr}\left[ e^{-\beta H_{1}}\right]$ is the partition function, and $\mathrm{Tr}[\,\cdot\,]
$ is the trace operation. Without loss of generality, and for this moment, let's not specify the expression of $H_{1}$. The initial average energy of the system is given by: $\langle E_{1}\rangle:=\mathrm{Tr}\left[\rho_{1}H_{1}\right]$. From $\mathbf{A}$ to $\mathbf{B}$, we apply the first unitary transformation given by $U$. The Hamiltonian gets changed from $H_{1}$ into $H_{2}$ and the state into $\rho_{2}:=U\rho_{1} U^{\dagger}$. After this unitary transformation, the average energy of the system becomes $\langle E_{2}\rangle:=\mathrm{Tr}\left[\rho_{2}H_{2}\right]$. Since $U$ is entropy preserving, this change in energy is $\textit{work}$. That is, we have, $\langle W_{1}\rangle:=\langle E_{2}\rangle-\langle E_{1}\rangle=\mathrm{Tr}\left[\rho_{2}H_{2}\right]-\mathrm{Tr}\left[\rho_{1}H_{1}\right]$. Then, from $\mathbf{B}$ to $\mathbf{C}$, we fix the Hamiltonian at $H_{2}$, but we apply a unital channel denoted by $\Phi$ \textit{to fuel the system}. In this case, the state of the system evolves into $\rho_{3}:=\Phi(\rho_{2})$. The average energy after applying $\Phi$ is given by $\langle E_{3}\rangle:=\mathrm{Tr}\left[\rho_{3}H_{2}\right]$. In this case, the change in energy is classified as heat, and it is given by, $\langle Q_{M}\rangle:=\langle E_{3}\rangle-\langle E_{2}\rangle=\mathrm{Tr}\left[\rho_{3}H_{2}\right]-\mathrm{Tr}\left[\rho_{2}H_{2}\right]$. Then, from $\mathbf{C}$ to $\mathbf{D}$, we bring back the Hamiltonian from $H_{2}$ into $H_{1}$. In this second adiabatic stroke, the state of the system becomes $\rho_{4}:=V\rho_{3}V^{\dagger}$. The average energy is given by $\langle E_{4}\rangle=\mathrm{Tr}\left[\rho_{4}H_{1}\right]$ and the change in energy is classified as work, and it is given by, $\langle W_{2}\rangle:=\langle E_{4}\rangle-\langle E_{3}\rangle=\mathrm{Tr}\left[\rho_{4}H_{1}\right]-\mathrm{Tr}\left[\rho_{3}H_{2}\right]$. Note that it is not necessary for $V$ to be the time-reversal of $U$, i.e., not necessarily $V:=\Theta U^{\dagger}\Theta^{\dagger}$, where $\Theta$ is the time-reversal operator. Finally, from $\mathbf{D}$ to $\mathbf{A}$, we let the system interact with the initial bath until it is fully in equilibrium, thus closing the cycle. The change in energy in this stroke is classified as heat, and it is given by, $\langle Q_{C}\rangle:=\langle E_{1}\rangle-\langle E_{4}\rangle=\mathrm{Tr}\left[\rho_{1}H_{1}\right]-\mathrm{Tr}\left[\rho_{4}H_{1}\right]$. The first law of thermodynamics states that $\langle W_{1}\rangle+\langle W_{2}\rangle+\langle Q_{M}\rangle+\langle Q_{C}\rangle=0$. From this, the total amount of work extracted is given by,
\begin{equation}
\langle W\rangle:=\langle Q_{M}\rangle+\langle Q_{C}\rangle=-(\langle W_{1}\rangle+\langle W_{2}\rangle).
\end{equation}

By a unital channel $\Phi$, we mean that for any finite dimension $d$, $\Phi(\mathbb{1}_{d}/d)=\mathbb{1}_{d}/d$, where $\mathbb{1}_{d}$ is the $d\times d$ identity matrix.

\begin{remark}
Before we continue, let's clarify the nature of the energy change after applying the unital channel $\Phi$, i.e., the nature of $\langle E_{3}\rangle-\langle E_{2}\rangle$. When the channel $\Phi$ is unitary, then in this case, the energy change is work, since $\Phi$ preserves the entropy of the system; see Refs. \cite{Alipour,Chen,Bernardo,Behzadi,Ahmadi}. Please note that unitary operations are an example of unital channels, since for an arbitrary $U$ we have $U(\mathbb{1}_{d}/d)U^{\dagger}=\mathbb{1}_{d}/d$. In this paper, on the other hand, we are considering unital channels that change the entropy of the system, and thus the energy provided is \textit{heat}.  An example of quantum unital channels that have been used to fuel quantum thermal machines (see, e.g., Refs. \cite{Kim,Das,Elouard,Buffoni,Anka,Behzadi,Prasanna1, KimT,Sagawa,Jordan,Chand,Lisboa,Shanhe, Bresque,JuzarSong,JordanNM,Yamamoto,BiswasP,SantosJ,BhandariJ,Perna,Robert}) is quantum projective measurements. 
\end{remark}

\subsection{Dephased engine}\label{de}

As above, consider as a working medium a quantum system with a finite dimension $d$, i.e., $d<\infty$. The Hamiltonian of the working medium can be expressed as follows:
\begin{equation}
H:=\sum_{i=1}^{d}\nu_{i}\Pi_{i}=\sum_{i=1}^{d}\nu_{i}|i\rangle\langle i|.\label{ehehehe}
\end{equation}
Where $\nu_{i}$ and $\Pi_{i}$ are, respectively, the corresponding eigenvalues and (assumed) rank 1 eigenprojectors of $H$. Even though not explicitly written, the eigenvalues and eigenprojectors can be functions of parameters such as time, coupling, or magnetic field. But let us not go into details.

Now let's define what we mean by the $\textit{dephased engine}$. Following the two-point measurement (TPM) scheme \cite{Campisi1, Esposito,Tasaki}, the system is now monitored between the \textit{strokes}. Below, we use the $\textit{(un)monitored}$ and $\textit{(un)dephased}$ engines interchangeably. First measuring the initial Hamiltonian $H_{1}$ at $\mathbf{A}$ in Fig. \ref{Otto}, gives one of its eigenvalues denoted from now on by $\nu_{n}^{(1)}$ (see, Eq. (\ref{ehehehe})), where the superscript $(1)$ refers to the fact that we are measuring $H_{1}$. In this case, according to the postulates of quantum mechanics, the system would be projected onto $\Pi_{n}^{(1)}\left(:=|n\rangle_{11}\langle n|\right)$. Following the same reasoning, we denote the measured eigenvalues of the Hamiltonians at $\mathbf{B}$, $\mathbf{C}$, and $\mathbf{D}$ in Fig. \ref{Otto}, by $\nu_{m}^{(2)}$, $\nu_{k}^{(2)}$, and $\nu_{l}^{(1)}$, respectively. Since we are considering projective measurements, the system would be in one of the eigenstates of the measured Hamiltonian after the measurement. Note that the quantum numbers $n$, $m$, $k$, and $l$ vary from $1$ into $d$. We call four successive measured energies, i.e., $\nu_{n}^{(1)}$, $\nu_{m}^{(2)}$, $\nu_{k}^{(2)}$, and $\nu_{l}^{(1)}$, $\textit{a stochastic cycle}$. Then, the collapsed states of the system in an arbitrary stochastic cycle are given as follows:
\begin{equation}
\Pi_{n}^{(1)}\rightarrow \Pi_{m}^{(2)}\rightarrow \Pi_{k}^{(2)} \rightarrow \Pi_{l}^{(1)}.\label{scc}
\end{equation}
In this case, the $\textit{stochastic work}$ and the $\textit{stochastic heat absorbed}$, are defined, respectively, as follows: $W:=\nu_{n}^{(1)}-\nu_{m}^{(2)}+\nu_{k}^{(2)}-\nu_{l}^{(1)}$, $Q_{M}:=\nu_{k}^{(2)}-\nu_{m}^{(2)}$.
On the other hand, by imposing the first law of thermodynamics at the stochastic level, we can define $Q_{C}:=\nu_{n}^{(1)}-\nu_{l}^{(1)}$. This quantity reflects only the true average of the heat released and not the higher cumulants. See Ref. \cite{Abdelkader2} for more details on this statement.
 Further, the probability (denoted by $p_{n,m,k,l}$ from now on) of an arbitrary stochastic cycle to be followed by the working medium is given by
\begin{equation}
p_{n,m,k,l}:=\frac{e^{-\beta \nu_{n}^{(1)}}}{Z}|{}_{2}\langle m|U|n\rangle_{1}|^{2}\left(
{}_{2}\langle k |\Phi(|m\rangle_{22}\langle m|)|k\rangle_{2}\right)
 |{}_{1}\langle l|V|k\rangle_{2}|^{2} .\label{pnmkl}
\end{equation}
Here we employ the fact that $\Pi_{m}^{(2)}$, $\Pi_{k}^{(2)}$, and $\Pi_{l}^{(1)}$ are considered to be rank 1 and thus defined similarly to $\Pi_{n}^{(1)}$. Furthermore, it should be emphasized that $p_{n,m,k,l}$ verifies the next two properties: $0\leq p_{n,m,k,l} \leq1$ for all stochastic cycles, and $\sum_{n,m,k,l} p_{n,m,k,l}=1$.
	
Equation (\ref{pnmkl}) can be further compressed into a nice form written as,
\begin{equation}
p_{n,m,k,l}=\mathrm{Tr}\left[\Pi_{l}^{(1)}V\Pi_{k}^{(2)}\Phi\left(\Pi_{m}^{(2)}U\Pi_{n}^{(1)}\rho_{1}\Pi_{n}^{(1)}U^{\dagger}\Pi_{m}^{(2)}\right)\Pi_{k}^{(2)}V^{\dagger} \right].\label{pnmkl1}
\end{equation}

We define the next four stochastic energies by: $E_{1}:=\nu_{n}^{(1)}$, $E_{2}:=\nu_{m}^{(2)}$, $E_{3}:=\nu_{k}^{(2)}$, and $E_{4}:=\nu_{l}^{(1)}$. The joint probability distribution (PD) that describes the measured energies $E_{1}$, $E_{2}$, $E_{3}$, and $E_{4}$ of our engine is given by
\begin{equation}
\begin{split}
P(E_{1},E_{2},E_{3},E_{4}) :=\sum_{n,m,k,l}p_{n,m,k,l}\delta(E_{1}-\nu_{n}^{(1)})\delta(E_{2}-\nu_{m}^{(2)})\delta(E_{3}-\nu_{k}^{(2)})\delta(E_{4}-\nu_{l}^{(1)}).
\end{split}
\label{PWQ}
\end{equation}
From a computational point of view, it is easier to work with the characteristic function than with the PD. Nevertheless, note that they contain the same information. The CF, denoted from now on by $\chi(\gamma_{1},\gamma_{2},\gamma_{3},\gamma_{4})$, is the fourier transform of $P(E_{1},E_{2},E_{3},E_{4})$, i.e.
\begin{equation}
\chi(\gamma_{1},\gamma_{2},\gamma_{3},\gamma_{4}):=\int P(E_{1},E_{2},E_{3},E_{4}) e^{i(\gamma_{1}\nu_{n}^{(1)}+\gamma_{2}\nu_{m}^{(2)}+\gamma_{3}\nu_{k}^{(2)}+\gamma_{4}\nu_{l}^{(1)})} dE_{1}dE_{2}dE_{3}dE_{4}.
\end{equation}
Here, $\gamma_{1}$, $\gamma_{2}$, $\gamma_{3}$, and $\gamma_{4}$ are the Fourier conjugates of $E_{1}$, $E_{2}$, $E_{3}$, and $E_{4}$, respectively. Since the outcomes of our engine are discrete, the integral would be replaced by a summation. Therefore, after simple algebra, one can arrive at the next expression:
\begin{equation}
\chi(\gamma_{1},\gamma_{2},\gamma_{3},\gamma_{4})  =\sum_{n,m,k,l}\mathrm{Tr}\left[\Pi_{l}^{(1)}V\Pi_{k}^{(2)}\Phi\left(\Pi_{m}^{(2)}U\Pi_{n}^{(1)}\rho_{1}\Pi_{n}^{(1)}U^{\dagger}\Pi_{m}^{(2)}\right)\Pi_{k}^{(2)}V^{\dagger} \right]
e^{i(\gamma_{1}\nu_{n}^{(1)}+\gamma_{2}\nu_{m}^{(2)}+\gamma_{3}\nu_{k}^{(2)}+\gamma_{4}\nu_{l}^{(1)})}.\label{Chi}
\end{equation}
The cumulants of $E_{1}$, $E_{2}$, $E_{3}$, and $E_{4}$ and their covariances can be derived from Eq. (\ref{Chi}) as follows:
\begin{equation}
\llangle E_{1}^{s1}E_{2}^{s2}E_{3}^{s3}E_{4}^{s4}\rrangle_{c}:=\frac{\partial^{s1}\partial^{s2}\partial^{s3}\partial^{s4}\rm log(\chi(\gamma_{1},\gamma_{2},\gamma_{3},\gamma_{4}))}{\partial(i\gamma_{1})^{s1}\partial(i\gamma_{2})^{s2}\partial(i\gamma_{3})^{s3}\partial(i\gamma_{4})^{s4}}\bigg\rvert_{\gamma_{1},\gamma_{2},\gamma_{3},\gamma_{4}=0}.\label{TTT}
\end{equation}  
Here, $s_{1}$, $s_{2}$, $s_{3}$, and $s_{4}$ are positive integers, $\log$ denotes the natural logarithm, and the subscript $c$ refers to the fact that $\llangle x^s\rrangle_c$ is the corresponding cumulant of $x$ and not its moment.

We should emphasize that the first derivative, with respect to, e.g., $\gamma_{1}$ in Eq. (\ref{Chi}), gives the average of $E_{1}$,
\begin{equation} 
\llangle E_{1}\rrangle_{c}=\frac{\partial\rm log(\chi(\gamma_{1},\gamma_{2},\gamma_{3},\gamma_{4}))}{\partial(i\gamma_{1})}\bigg\rvert_{\gamma_{1},\gamma_{2},\gamma_{3},\gamma_{4}=0},
\end{equation}
while the second derivative gives its variance (also so-called $\textit{second cumulant}$)
\begin{equation} 
\llangle E_{1}^{2}\rrangle_{c}=\frac{\partial^{2}\rm log(\chi(\gamma_{1},\gamma_{2},\gamma_{3},\gamma_{4}))}{\partial(i\gamma_{1})^{2}}\bigg\rvert_{\gamma_{1},\gamma_{2},\gamma_{3},\gamma_{4}=0}. 
\end{equation} 
In this latter equation, if we eliminate the log function, we obtain the $\textit{second moment}$ of $E_{1}$.The same thing applies to $\gamma_{2}$, $\gamma_{3}$, and $\gamma_{4}$. Other derivatives give the correlations between the stochastic energies. For example, 
\begin{equation} 
\llangle E_{1}E_{2}\rrangle_{c}=\frac{\partial^{2}\rm log(\chi(\gamma_{1},\gamma_{2},\gamma_{3},\gamma_{4}))}{\partial(i\gamma_{1})\partial(i\gamma_{2})}\bigg\rvert_{\gamma_{1},\gamma_{2},\gamma_{3},\gamma_{4}=0},\label{cov} 
\end{equation} 
gives the covariance of $E_{1}$ and $E_{2}$. Please note that in what follows we drop the subscript $c$ from the averages.

We note that one can compress Eq. (\ref{Chi}) to obtain,
\begin{equation}
\begin{split}
\chi(\gamma_{1},\gamma_{2},\gamma_{3},\gamma_{4})=\mathrm{Tr}\left[e^{i\gamma_{4}H_{1}} V e^{i\gamma_{3}H_{2}} \Delta_{2}\left(\Phi \left(\Delta_{2}\left( e^{i\gamma_{2}H_{2}} U e^{i(\gamma_{1}+i\beta)H_{1}} U^{\dagger}\right)\right)\right) V^{\dagger} \right]/Z.\label{chiF}
\end{split}
\end{equation}
 Here $\Delta_{2}\left(\cdot\right):=\sum_{i=1}^{d}|i\rangle_{22}\langle i|(\cdot)|i\rangle_{22}\langle i|$ is a complete dephasing channel in the eigenbasis of the Hamiltonian $H_{2}$. This shows the effect of the projective measurement. From Eq. (\ref{chiF}) the average energies along the cycle are given by: $\llangle E_{1}\rrangle=\mathrm{Tr}\left[ \rho_{1} H_{1} \right]$, $\llangle E_{2}\rrangle=\mathrm{Tr}\left[\rho_{2}H_{2}\right]$, $\llangle E_{3}\rrangle=\mathrm{Tr}\left[\Phi(\Delta_{2}(U\rho_{1}U^{\dagger}))H_{2}\right]$ and $\llangle E_{4}\rrangle=\mathrm{Tr}\left[V \Delta_{2}(\Phi(\Delta_{2}(U\rho_{1}U^{\dagger})))V^{\dagger}H_{1}\right]$. We see that only the third and fourth averages are affected by measurement. That is, the averages $\llangle E_{3}\rrangle$ and $\llangle E_{4}\rrangle$ are computed with respect to the evolution of the dephased states. This is because the initial state is incoherent with respect to $H_{1}$, and thus the average energy $\llangle E_{2}\rrangle$ is not affected.

The cumulants of work and heats follow from Eq. (\ref{Chi}) as follows:
\begin{enumerate}
\item $Q_{M}$ cumulants: we set $\gamma_{1}=\gamma_{4}=0$ and $-\gamma_{2}=\gamma_{3}=\gamma_{M}$,
\item $Q_{C}$ cumulants: we set $\gamma_{2}=\gamma_{3}=0$ and $\gamma_{1}=-\gamma_{4}=\gamma_{C}$,
\item $W$ cumulants: we set $\gamma_{1}=-\gamma_{4}=\gamma_{W}$ and $\gamma_{3}=-\gamma_{2}=\gamma_{W}$.
\end{enumerate}\label{defin1}
Here, $\gamma_{M}$, $\gamma_{C}$, and $\gamma_{W}$ are the Fourier conjugates of $Q_{M}$, $Q_{C}$, and $W$, respectively. The work and heat averages for the monitored engine are defined as follows: 
\begin{equation}
\begin{split}
& 
\llangle Q_{M}\rrangle  =\mathrm{Tr}\left[\Phi(\Delta_{2}(U\rho_{1}U^{\dagger}))H_{2}\right]-\mathrm{Tr}\left[U\rho_{1}U^{\dagger}H_{2}\right],
\\ &
\llangle Q_{C}\rrangle  =\mathrm{Tr}\left[\rho_{1} H_{1}\right]-\mathrm{Tr}\left[V \Delta_{2}(\Phi(\Delta_{2}(U\rho_{1}U^{\dagger})))V^{\dagger}H_{1}\right],
\\ &
\llangle W\rrangle  =\mathrm{Tr}\left[\rho_{1} H_{1}\right]-\mathrm{Tr}\left[V \Delta_{2}(\Phi(\Delta_{2}(U\rho_{1}U^{\dagger})))V^{\dagger}H_{1}\right]+\mathrm{Tr}\left[\Phi(\Delta_{2}(U\rho_{1}U^{\dagger}))H_{2}\right]-\mathrm{Tr}\left[U\rho_{1}U^{\dagger}H_{2}\right].
\end{split}
\label{FFF}
\end{equation}

\subsection{Cumulants of the undephased engine}

In the previous subsection, we showed how one could obtain the cumulants of the thermodynamic quantities from the characteristic function. However, from Eqs. (\ref{chiF}) and (\ref{FFF}), we see that the cumulants are computed with respect to the dephased states of the system. This is a result of \textit{measuring projectively} the working medium between the strokes. Actually, in this case, all the coherence created in the energy eigenbasis gets killed by the measurement. Thus, the important question now is: how can one compute the cumulants of the undephased engine? 

A quick answer to this question is: for the averages, one can see that when we eliminate the dephasing channel in Eq. (\ref{chiF}), then we would obtain the average energies of the undephased engine, given in section (\ref{ude}). This would motivate us to do the same thing for higher cumulants; see Appendix \ref{hjkjkl} for more details. In this situation, the CF Eq. (\ref{Chi}) becomes,
\begin{equation}
\begin{split}
\chi_{UDE}(\gamma_{1},\gamma_{2},\gamma_{3},\gamma_{4}):=\mathrm{Tr}\left[e^{i\gamma_{4}H_{1}} V e^{i\gamma_{3}H_{2}} \Phi \left( e^{i\gamma_{2}H_{2}} U e^{i(\gamma_{1}+i\beta)H_{1}} U^{\dagger}\right) V^{\dagger} \right]/Z.\label{chiFF}
\end{split}
\end{equation}
The subscript $UDE$ refers to the fact that this CF is for the undephased engine. Similarly to Eq. (\ref{PWQ}), the CF in Eq. (\ref{chiFF}) follows from the next $\textit{ quasiprobability distribution}$,
\begin{equation}
\begin{split}
P_{UDE}(E_{1},E_{2},E_{3},E_{4}) 
:=\sum_{n,m,k,l}p_{n,m,k,l}^{UDE}\delta(E_{1}-\nu_{n}^{(1)})\delta(E_{2}-\nu_{m}^{(2)})\delta(E_{3}-\nu_{k}^{(2)})\delta(E_{4}-\nu_{l}^{(1)}),
\end{split}
\label{pnmkl2}
\end{equation}
with $p_{n,m,k,l}^{UDE}:=\mathrm{Tr}\left[\Pi_{l}^{(1)}V\Pi_{k}^{(2)}\Phi\left(\Pi_{m}^{(2)}U\Pi_{n}^{(1)}\rho_{1}U^{\dagger}\right)V^{\dagger} \right]$. The fact that $P_{UDE}(E_{1},E_{2},E_{3},E_{4}) $ is a $\textit{quasiprobaility}$ and not a true probability, is because $p_{n,m,k,l}^{UDE}$ is not positive in general, and even more, it can be a complex number. Note that Eq. (\ref{pnmkl2}) is known in the literature as $\textit{Kirwkood-Dirac}$ quasiprobability \cite{Kirkwood,Dirac,MatteoLevy,NicoleY,Chiara,Francica2,Francica1,HTQuan,Dipojono,Santini,Belenchia,Bievre} — {the real part of $p_{n,m,kl}^{UDE}$ is known as the Margenau-Hill distribution \cite{Hill}. However, note that this quasiprobability still satisfies, $\sum_{n,m,k,l} p_{n,m,k,l}^{UDE}=1$.

The cumulants of the energies of the undephased engine follow from Eq. (\ref{chiFF}) as follows:
\begin{equation}
\langle E_{1}^{s1}E_{2}^{s2}E_{3}^{s3}E_{4}^{s4}\rangle_{c}:=\frac{\partial^{s1}\partial^{s2}\partial^{s3}\partial^{s4}\rm log(\chi_{UDE}(\gamma_{1},\gamma_{2},\gamma_{3},\gamma_{4}))}{\partial(i\gamma_{1})^{s1}\partial(i\gamma_{2})^{s2}\partial(i\gamma_{3})^{s3}\partial(i\gamma_{4})^{s4}}\bigg\rvert_{\gamma_{1},\gamma_{2},\gamma_{3},\gamma_{4}=0}.\label{EEEE}
\end{equation} 
Note that we reserve the notations $\langle \cdot\rangle$ and $\langle (\cdot)^{2}\rangle_{c}$, respectively, for the first and second cumulants. Again, note that we dropped the subscript $c$ from the averages. Finally, note that the cumulants of work and heat follow from Eq. (\ref{EEEE}) similarly to the dephased engine \ref{defin1}. However, note that the second cumulants of $W$ and $Q_{M}$ have a real and imaginary part. In what follows, this statement will be clear.

\subsection{Can the cycle work as a refrigerator?}\label{Entropy}
In Ref. \cite{Abdelkader}, we have shown that when the working medium is a single qubit, and independently of the parameters, the system can never work as a $\textit{refrigerator}$. The question now is: is this valid for any finite-dimension system?

To answer this question, we should define the next stochastic quantity, $\Sigma:=-\beta(\nu_{n}^{(1)}-\nu_{l}^{(1)})$. This latter is simply the $\textit{stochastic entropy production} $. When taking its average, we obtain $\langle \Sigma\rangle=-\beta\langle Q_{C}\rangle$ for the undephased engine and $\llangle \Sigma\rrangle=-\beta\llangle Q_{C}\rrangle$ for the dephased engine. For the monitored engine, one can show that:
\begin{equation}
\llangle e^{-\Sigma}\rrangle:=\sum_{n,m,k,l}e^{\beta(\nu_{n}^{(1)}-\nu_{l}^{(1)})} p_{n,m,k,l}=1. \label{HHT}
\end{equation}
Since $p_{n,m,k,l}$ is always positive, we can apply Jensen inequality to obtain $e^{-\llangle \Sigma\rrangle}\leq \llangle e^{-\Sigma}\rrangle=1$, from which one obtains that $\llangle \Sigma\rrangle\geq0$. From the fact that $\llangle \Sigma\rrangle=-\beta \llangle Q_ {C}\rrangle$, and for $\beta\geq0$ one has $\llangle Q_{C}\rrangle\leq0$. Thus, we proved that regardless of the dimension, whenever the channel is unital, the system cannot work as a refrigerator. When $\beta<0$, we have $\llangle Q_{C}\rrangle\geq0$. But note that this is not in contradiction with the second law since this is not a refrigerator but a heat engine with unit efficiency; see Refs. \cite{Struchtrup,Warren}.

Similarly to the dephased engine, for the undephased engine we have,
\begin{equation}
\langle e^{-\Sigma}\rangle:=\sum_{n,m,k,l}e^{\beta(\nu_{n}^{(1)}-\nu_{l}^{(1)})} p_{n,m,k,l}^{UDE}=1.
\label{HHHT}
\end{equation}
But note that in this case, the Jensen inequality cannot be applied directly to $p_{n,m,k,l}^{UDE}$ since it is not positive and even it can be a complex number. That is, care should be taken here. But one can show that there is a beautiful way to prove that $\langle \Sigma\rangle\geq0$ even for the unmonitored engine. We have,
\begin{equation}
\langle e^{-\Sigma}\rangle=\sum_{n,m,k,l}e^{\beta(\nu_{n}^{(1)}-\nu_{l}^{(1)})} p_{n,m,k,l}^{UDE}=\sum_{n,l}e^{\beta(\nu_{n}^{(1)}-\nu_{l}^{(1)})} \sum_{m,k}p_{n,m,k,l}^{UDE}=1.
\end{equation}
Here $\sum_{m,k}p_{n,m,k,l}^{UDE}=\mathrm{Tr} \left[ \Pi_{l}^{(1)}V \Phi( U\Pi_{n}^{(1)}\rho_{1}U^{\dagger})V^{\dagger} \right]$. One can prove that the latter is a true probability since it is always positive. Therefore, in this case, we can apply Jensen inequality safely, and we obtain $\langle \Sigma\rangle\geq0$, from which we have $\langle Q_{C}\rangle\leq0$. This proves that the heat exchanged with the cold bath is always $\leq0$ in accordance with the second law. This generalizes the results of Refs. \cite{Abdelkader,Shanhe}, where the proof was only limited to qubit systems. In contrast, this proof is valid for an arbitrary finite-dimensional working medium.

Now let's return to the interpretation of $\sum_{m,k}p_{n,m,k,l}^{UDE}$. The latter's expression is given by:
\begin{equation}
\sum_{m,k}p_{n,m,k,l}^{UDE}=\mathrm{Tr} \left[ \Pi_{l}^{(1)}V \Phi( U\Pi_{n}^{(1)}\rho_{1}U^{\dagger})V^{\dagger} \right]=\mathrm{Tr} \left[ \Pi_{l}^{(1)}V \Phi( U\Pi_{n}^{(1)}\rho_{1}\Pi_{n}^{(1)}U^{\dagger})V^{\dagger} \right].\label{inter}
\end{equation}
Here we used the fact that $\Pi_{n}^{(1)}\rho_{1}=\rho_{1}\Pi_{n}^{(1)}$, since $\rho_{1}$ is incoherent in the eigenbasis of $H_{1}$. The interpretation of Eq. (\ref{inter}) is nothing but: first we projectively measure the working medium at $\mathbf{A}$ in the cycle (see Fig. \ref{Otto}), then evolve the projected state by $V\Phi(U(.)U^{\dagger})V^{\dagger}$ and finally projectively measure the system again at $\mathbf{D}$ (see Fig. \ref{Otto}). In this case, even though the system is projectively measured at the beginning and at the end of the cycle, these two measurements do not influence the cumulants of $Q_{C}$. This is because the cumulants are given with respect to the undephased states and not the dephased ones. This shows that it is the projective measurement at $\mathbf{B}$ and $\mathbf{C}$ that influences the cumulants of work and heats. This is because they kill the coherence created in the eigenbasis of the Hamiltonian $H_{2}$.

For further details about entropy production, see Appendix \ref{PROD}.

\section{A qubit as a working medium}\label{casestu}

The above discussion was for an arbitrary finite-dimensional working substance. In what follows, we limit ourselves to a single qubit. However, even with a single qubit, we will see that the results are not trivial. 

For a two-level system, $|e\rangle$ and $|g\rangle$ are, respectively, the excited and ground states. The first Hamiltonian in the Otto cycle, see Fig. \ref{Otto}, is given by $H_{1}:=\nu_{1}\sigma_{z}$, where $\sigma_{z}$ is the $\textit{z}$-Pauli operator. The expression of $H_{1}$ in the computational basis $\{|e\rangle,|g\rangle\}$ is,
\begin{equation}
H_{1}=\nu_{1}(|e\rangle\langle e|-|g\rangle\langle g|).\label{sigmamz}
\end{equation}
We see that the gap between the excited and ground states is $2\nu_{1}$. For the second Hamiltonian, i.e., $H_{2}$, we choose it to be $H_{2}:=\nu_{2}\sigma_{x}$, where $\sigma_{x}$ is the $\textit{x}$-Pauli operator. In what follows, we take $\nu_{2}\geq\nu_{1}$. $H_{2}$ can be written as follows:
\begin{equation}
H_{2}=\nu_{2}(|e\rangle\langle g|+|g\rangle\langle e|).\label{sigmamx}
\end{equation}
Its eignestates are $|+\rangle:=(|e\rangle+|g\rangle)/\sqrt{2}$ and $|-\rangle:=(|e\rangle-|g\rangle)/\sqrt{2}$, with their corresponding eigenvalues being $\nu_{2}$ and $-\nu_{2}$, respectively. Following Refs. \cite{Zener1,Zener2,Zener3,JuzarSong,Denzler}, the unitary operator $U$ that governs the first adiabatic transformation, i.e., $\mathbf{A}\rightarrow \mathbf{B}$ in Fig. \ref{Otto}, is given by $U:=\sqrt{1-\delta}\left(e^{-i\phi}|+\rangle\langle e|+e^{i\phi}|-\rangle\langle g|\right)+\sqrt{\delta}\left(|-\rangle\langle e|-|+\rangle\langle g|\right)$. And written in the basis $\{|e\rangle,|g\rangle\}$ we have,
\begin{equation}
\begin{split}
U=\begin{pmatrix}
\sqrt{1-\delta}e^{-i\phi}+\sqrt{\delta} & \sqrt{1-\delta}e^{i\phi}-\sqrt{\delta} \\
\sqrt{1-\delta}e^{-i\phi}-\sqrt{\delta} & -\sqrt{1-\delta}e^{i\phi}-\sqrt{\delta} \\
\end{pmatrix}/\sqrt{2}.
\end{split}\label{Unit}
\end{equation}
$\delta$ $\in[0,1]$ here is the degree of the non-adiabaticity, and $\phi$ $\in[0,2\pi]$ is a phase. To see that $\delta$ is a transition probability between the eigenstates of $H_{1}$ and $H_{2}$ one can show that
\begin{equation}
|\langle +| U |g \rangle|^{2}=|\langle -| U |e \rangle|^{2}=\delta,
\end{equation}
and,
\begin{equation}
|\langle -| U |g \rangle|^{2}=|\langle +| U |e \rangle|^{2}=1-\delta.
\end{equation}
When $\delta=0$, then in this case we are in the adiabatic regime. In this situation, the unitary operator reduces to $e^{-i\phi}|+\rangle\langle e|+e^{i\phi}|-\rangle\langle g|$. We see that after $U$ acts on $\rho_{1}$, only it changes the eigenstates without changing the populations of the ground state and the excited state since the initial state is a thermal state. On the other hand, when $\delta=1$, in this case, $U$ becomes a swap operator since it reduces to $|-\rangle\langle e|-|+\rangle\langle g|$. From these two cases, we see that the phase $\phi$ would be relevant only when $0<\delta<1$.

 Limiting ourselves to the case when the cycle is time-reversal symmetric, the unitary operator $V$ characterizing the second adiabatic stroke, i.e., the stroke $\mathbf{C}\rightarrow \mathbf{D}$, is defined by, $V:= C^{\ast}U^{\dagger}C =\sqrt{1-\delta}\left(e^{-i\phi}|e\rangle\langle +|+e^{i\phi}|g\rangle\langle-|\right)+\sqrt{\delta}\left(|e\rangle\langle -|-|g\rangle\langle+|\right)$, where $C$ here is the complex conjugation operator. The expression of $V$ in the computational basis $\{|e\rangle,|g\rangle\}$ is given by,
\begin{equation}
\begin{split}
V=\begin{pmatrix}
\sqrt{1-\delta}e^{-i\phi}+\sqrt{\delta} & \sqrt{1-\delta}e^{-i\phi}-\sqrt{\delta} \\
\sqrt{1-\delta}e^{i\phi}-\sqrt{\delta} & -\sqrt{1-\delta}e^{i\phi}-\sqrt{\delta} \\
\end{pmatrix}/\sqrt{2}.
\end{split}\label{Unit2}
\end{equation}
Note that for the purpose of the paper, we do not need to specify the expressions of $\delta$ and $\phi$ as a function of the parameters. Further note that the operators $U$ and $V$ are elements of the special uniray group $SU(2)$.

\begin{figure}[hbtp]
\centering
\includegraphics[scale=0.48]{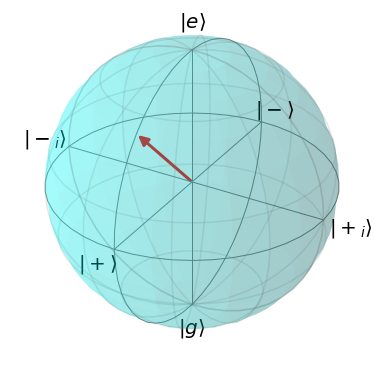}
\caption{Scheme of the Bloch sphere. The vector drawn here is $|\pi_{1}\rangle$ for $\chi=0$ and $\phi=\pi/4$. $|+_{i}\rangle=(|e\rangle+i|g\rangle)/\sqrt{2}$ and $|-_{i}\rangle=(|e\rangle-i|g\rangle)/\sqrt{2}$ are the eigenstates of $\sigma_{y}$, i.e., the \textit{y}-Pauli operator. The states $|+\rangle$, $[-\rangle$, $|+_{i}\rangle$, and $|-_{i}\rangle$ are some of the states that belong to the $\textit{xy}$-plane. Similarly, the states $|+\rangle$, $[-\rangle$, $|e\rangle$, and $|g\rangle$ are one of the states that belong to the $\textit{xz}$-plane, and the states $|+_{i}\rangle$, $[-_{i}\rangle$, $|e\rangle$, and $|g\rangle$ are one of the states that belong to the $\textit{yz}$-plane.}\label{blochh}
\end{figure}

It is necessary to mention that the choice of the Hamiltonians Eqs. (\ref{sigmamz})-(\ref{sigmamx}) is motivated by three reasons: (i) because they were already considered in the experimental implementation of the quantum Otto cycles and in assessing their statistics \cite{Shanhe,JPeterson,Camati,DenzlerR} (ii) $H_{1}$ and $H_{2}$ do not commute, thus coherence would be generated, which is our purpose (of course choosing $H_{2}=\nu_{2}\sigma_{y}$ would also work),  (iii) the eigenstates of $H_{1}$ and $H_{2}$ are independent of the parameters — one can also consider $H_{1}$ and $H_{2}$ to be a linear combination of all Pauli matrices, (cf. Eq. (\ref{qubitHH}) below), but in this case, their eigenstates have a more complicated expression and thus also $U$ and $V$. Thus, the choice of $H_{1}$ and $H_{2}$ is for the ease of the numerical analysis.

For the unital channel fueling the engine, we consider the quantum projective measurement channel $\Phi(\cdot)=\sum_{j=1}^{2} \pi_{j}(\cdot)\pi_{j}$ with $\pi_{1}(:=|\pi_{1}\rangle\langle \pi_{1}|)$, and $\pi_{2}(:=|\pi_{2}\rangle\langle \pi_{2}|)$ are projective operators. By projective, we mean that they verify: $\pi_{1}+\pi_{2}=\mathbb{1}_{2}$, $\pi_{1}\pi_{1}=\pi_{1}$, $\pi_{2}\pi_{2}=\pi_{2}$, and $\pi_{1}\pi_{2}=\mathbb{0}_{2}$, where $\mathbb{0}_{2}$ is the zero operator. The expressions of $|\pi_{1}\rangle$ and $|\pi_{2}\rangle$ are given as follows:
\begin{equation}
|\pi_{1}\rangle=\cos\left(\alpha/2\right)|e\rangle+e^{i\chi}\sin\left(\alpha/2\right)|g\rangle,\label{Pi1}
\end{equation}
and
\begin{equation}
|\pi_{2}\rangle=e^{-i\chi}\sin\left(\alpha/2\right)|e\rangle-\cos\left(\alpha/2\right)|g\rangle,\label{Pi2}
\end{equation}
with $0\leq \alpha \leq\pi$ and $0\leq \chi < 2\pi$. Written in the computational basis, we have 
\begin{equation}
\pi_{1}=\begin{pmatrix}
\cos^{2}(\alpha/2) & \sin(\alpha/2)\cos(\alpha/2)e^{-i\chi} \\
\sin(\alpha/2)\cos(\alpha/2)e^{i\chi} & \sin^{2}(\alpha/2) \\
\end{pmatrix},\label{pi1}
\end{equation}
and,
\begin{equation}
\pi_{2}=\begin{pmatrix}
\sin^{2}(\alpha/2) & -\sin(\alpha/2)\cos(\alpha/2)e^{-i\chi} \\
-\sin(\alpha/2)\cos(\alpha/2)e^{i\chi} & \cos^{2}(\alpha/2) \\
\end{pmatrix}.\label{pi2}
\end{equation}
We note that because the operators in Eqs. (\ref{pi1})-(\ref{pi2}) satisfy $\pi_{1}(\alpha,\chi)=\pi_{2}((-\alpha+\pi,\chi+\pi))$ and $\pi_{2}(\alpha,\chi)=\pi_{1}((-\alpha+\pi,\chi+\pi))$ then all cumulants satisfy this symmetry relation.

In Ref. \cite{Kim} it was shown that a projective measurement channel can fuel the engine with heat. In Ref. \cite{Buffoni} the authors have shown that a quantum projective can make a two-stroke Otto heat engine work in four different modes i.e. engine, refrigerator, accelerator and heater. For further details about quantum thermal machines fueled by quantum projective measurement one is recommended to see \cite{Das,Anka,Prasanna1,Chand,Shanhe, Yamamoto,BiswasP,SantosJ,BhandariJ} and references therein. It is these works that were our motivation in choosing the unital channel as a quantum projective measurement.

At the next values of $\chi$ and $\alpha$, we have the next bases and planes (see Fig. \ref{blochh}): 
\begin{enumerate}
\item $\chi=0$ and $\alpha=\pi/2$ corresponds to the \textit{x}-basis,
\item $\chi=\pi/2$ and $\alpha=\pi/2$ corresponds to the \textit{y}-basis,
\item $\alpha=0$ (and independently of $\chi$) corresponds to the \textit{z}-basis,
\item for arbitrary $\alpha$ and for $\chi=0$ corresponds to the \textit{xz}-plane,
\item for arbitrary $\alpha$ and for $\chi=\pi/2$ corresponds to the $\textit{yz}$-plane,
\item and finally for arbitrary $\chi$ and for $\alpha=\pi/2$ corresponds to the $\textit{xy}$-plane.
\end{enumerate}
This means is that when set, e.g., $\chi=\pi/2$ and $\alpha=\pi/2$, we are measuring the qubit in the $\textit{y}$-basis.

\begin{remark}
Please note that the projective measurement defined by Eqs. (\ref{Unit})-(\ref{Unit2}) is used to fuel the engine with heat. When we were talking about projective measurements in Sec. \ref{1}, we meant the quantum projective measurements applied between the strokes to assess the fluctuations of the thermodynamic quantities.
\end{remark}

\subsection{Main quantities of interest}
After showing how one can derive the cumulants of the dephased and undephased engines, presenting detailed information about the working substance that we apply our analytical results to, and fixing the notation, let's now give in detail the main quantities that we are going to focus on.

In this paper, after deriving our analytical results, our main interest is on the influence of the parameters: $\beta$, $\nu_{1}$, $\nu_{2}$, $\phi$, $\delta$, $\chi$, and $\alpha$, on the next three quantities of the dephased and undephased engines: work averages $\langle W\rangle$ and $\llangle W\rrangle$, efficiencies $\langle \eta\rangle$ and $\llangle \eta\rrangle$ (defined respectively, by $\langle \eta\rangle:=\langle W\rangle/\langle Q_{M}\rangle$ and $\llangle \eta\rrangle:=\llangle W\rrangle/\llangle Q_{M}\rrangle$) and work reliabilities $R_{WUD}(:=\langle W\rangle/\sqrt{\mathrm{Re}\left[\langle W^{2}\rangle_{c}\right]})$ and $R_{WD}(:=\llangle W\rrangle/\sqrt{\llangle W^{2}\rrangle_{c}})$. Here, $\mathrm{Re}$ refers to the fact that we are considering only the real part of the second cumulant of work that follows from Eq. (\ref{chiFF}). Furthermore, note that by $\langle \eta\rangle$ and $\llangle \eta\rrangle$, we do not mean the average of the stochastic efficiency $W/Q_{M}$, since the latter can diverge, see Refs. \cite{Fei,Denzler}. That is, $\langle W/Q_{M}\rangle(\llangle W/Q_{M}\rrangle)$ is different from $\langle W\rangle/\langle Q_{M}\rangle$ ($\llangle W\rrangle/\llangle Q_{M}\rrangle$) that we consider in our paper. In general, we want to know on which basis in Fig. \ref{blochh} we should measure the qubit such that our three quantities can achieve their best possible values.

Finally, one can show from the first law and the second of thermodynamics, i.e., $\llangle \Sigma\rrangle\geq0$ and $\langle \Sigma\rangle\geq0$ (see Subsection \ref{Entropy}), that $\llangle \eta\rrangle\leq1$ and $\langle \eta\rangle\leq1$. In Ref. \cite{Abdelkader}, we proved that for a single qubit, the efficiency of the monitored engine is always $\leq1-\nu_{1}/\nu_{2}$; however, when increasing the dimension of the working medium, efficiency can exceed the Otto bound even without the presence of coherence. For example, in Ref. \cite{Anka}, efficiency attained 1, even without any coherence or correlations. The point is that, while in the single-qubit case, coherence is necessary for efficiency to attain 1, it is not necessary for higher-dimensional working systems.

\section{Dephased engine}\label{dede}
In this section, we focus on the dephased engine.
\subsection{First and second cumulants}

Now let's forget about the unitaries and the unital channel in Sec. \ref{casestu}. Let's consider an arbitrary initial qubit Hamiltonian $H_{1}$ (Fig. \ref{Otto}) given by,
\begin{equation}
H_{1}:=\nu_{1}(|+\rangle_{11}\langle+|-|-\rangle_{11}\langle-|),\label{H11}
\end{equation}
where $|+\rangle_{1}(|-\rangle_{1})$ is the excited state (ground state) of $H_{1}$. Similarly to Eq. (\ref{H11}), the second Hamiltonian in Fig. \ref{Otto} is given by,
\begin{equation}
H_{2}:=\nu_{2}(|+\rangle_{22}\langle+|-|-\rangle_{22}\langle-|)\label{H22}.
\end{equation}
In Appendix \ref{zedine}, we explain what we mean by arbitrary $H_{1(2)}$.

We define the next transition probabilities:
\begin{equation}
\delta':=|_{2}\langle +|U|-\rangle_{1}|^{2},\label{deltap}
\end{equation}
\begin{equation}
\theta:={}_{2}\langle -|\Phi\left(|+\rangle_{22}\langle+|\right)|-\rangle_{2},\label{tethaaaa}
\end{equation}
and
\begin{equation}
\zeta:=|_{1}\langle +|V|-\rangle_{2}|^{2}.\label{zetaa}
\end{equation}
Here $\delta'$ is the transition probability from the ground state of $H_{1}$ into the excited state of $H_{2}$ after applying $U$. $\theta$ and $\zeta$ can be defined similarly. After long but straightforward algebra, the exact analytical expression of the forward CF (see Ref. \cite{Abdelkader2}), Eq. (\ref{Chi}), is given by:
\begin{widetext}
\begin{equation}
\begin{split}
\chi(\gamma_{1},\gamma_{2},\gamma_{3},\gamma_{4})  = & ((1-\delta')(1-\theta)(1-\zeta)\cos((\gamma_{1}+\gamma_{4}+i\beta)\nu_{1}+(\gamma_{3}+\gamma_{2})\nu_{2})
\\ &
+(1-\delta')(1-\theta)\zeta \cos((\gamma_{1}-\gamma_{4}+i\beta)\nu_{1}+(\gamma_{3}+\gamma_{2})\nu_{2})
\\ &
+(1-\delta')\theta \zeta \cos((\gamma_{1}+\gamma_{4}+i\beta)\nu_{1}+(-\gamma_{3}+\gamma_{2})\nu_{2})
\\ &
+(1-\delta')\theta (1-\zeta) \cos((\gamma_{1}-\gamma_{4}+i\beta)\nu_{1}+(-\gamma_{3}+\gamma_{2})\nu_{2})
\\ &
+\delta'\theta (1-\zeta) \cos((\gamma_{1}+\gamma_{4}+i\beta)\nu_{1}+(\gamma_{3}-\gamma_{2})\nu_{2})
\\ &
+\delta'\theta \zeta \cos((\gamma_{1}-\gamma_{4}+i\beta)\nu_{1}+(\gamma_{3}-\gamma_{2})\nu_{2})
\\ &
+\delta'(1-\theta) \zeta \cos((\gamma_{1}+\gamma_{4}+i\beta)\nu_{1}-(\gamma_{3}+\gamma_{2})\nu_{2})
\\ &
+\delta'(1-\theta) (1-\zeta) \cos((\gamma_{1}-\gamma_{4}+i\beta)\nu_{1}-(\gamma_{3}+\gamma_{2})\nu_{2}))/\cos(i\beta\nu_{1}).\label{chiga}
\end{split}
\end{equation}
\end{widetext}
From this equation, one can derive the averages and variances of $Q_{M}$, $Q_{C}$, and $W$. We have:
\begin{center}
\begin{equation}
\begin{split}
& \llangle Q_{M}\rrangle=2(1-2\delta')\theta\nu_{2}\tanh(\beta\nu_{1}),
\end{split}\label{qma}
\end{equation}
\begin{equation}
\llangle Q_{M}^{2}\rrangle_{c}=4\theta\nu_{2}^{2}-\llangle Q_{M}\rrangle^{2},\label{qmv}
\end{equation}
\begin{equation}
\llangle Q_{C}\rrangle=-2(\theta+(1-2\theta)(\delta'+\zeta-2\delta'\zeta))\nu_{1}\tanh(\beta\nu_{1}),\label{qca}
\end{equation}
\begin{equation}
\llangle Q_{C}^{2}\rrangle_{c}=-\llangle Q_{C}\rrangle(2\nu_{1}\coth(\beta\nu_{1})+\llangle Q_{c}\rrangle),\label{qcv}
\end{equation}
\begin{equation}
\begin{split}
\llangle W\rrangle=\llangle Q_{M}\rrangle+\llangle Q_{C}\rrangle & =2((1-2\delta')\theta\nu_{2}-(\theta+(1-2\theta)(\delta'+\zeta-2\delta'\zeta))\nu_{1})\tanh(\beta\nu_{1}),
\end{split}\label{wa}
\end{equation}
and,
\begin{equation}
\begin{split}
\llangle W^{2}\rrangle_{c} & = -2\nu_{1}\coth(\beta\nu_{1})\llangle Q_{C}\rrangle+8\theta(\delta'+\zeta-1)\nu_{1}\nu_{2}+4\theta\nu_{2}^{2}-\llangle W\rrangle^{2}.
\end{split}\label{wv}
\end{equation}
\end{center}

When $\zeta=\delta'$, in Ref \cite{Abdelkader2}, we proved that, $ \llangle W^{2}\rrangle_{c}/\llangle W\rrangle^{2}\geq\llangle Q_{C}^{2}\rrangle_{c}/\llangle Q_{C}\rrangle^{2}\geq2/\llangle \Sigma\rrangle-1 $ and $\llangle Q_{M}^{2}\rrangle_{c}/\llangle Q_{M}\rrangle^{2}\geq\llangle Q_{C}^{2}\rrangle_{c}/\llangle Q_{C}\rrangle^{2}\geq2/\llangle \Sigma\rrangle-1$ independently of the operation regime, i.e., this is valid when the system is working as a heat engine, heater, and accelerator. In the heat engine region, we have shown that
\begin{equation}
\frac{\llangle W^{2}\rrangle_{c}}{\llangle W\rrangle^{2}}\geq\frac{\llangle Q_{M}^{2}\rrangle_{c}}{\llangle Q_{M}\rrangle^{2}}\geq\frac{\llangle Q_{C}^{2}\rrangle_{c}}{\llangle Q_{C}\rrangle^{2}}\geq\frac{2}{\llangle \Sigma\rrangle}-1,\label{mqcqhw1}
\end{equation}
and ,
\begin{equation}
\llangle\eta\rrangle^{2}=\frac{\llangle W\rrangle^{2}}{\llangle Q_{M}\rrangle^{2}}\leq\frac{\llangle W^{2}\rrangle_{c}}{\llangle Q_{M}^{2}\rrangle_{c}}<1.\label{UppLo}
\end{equation}
The fact that the ratio of fluctuations is always less than 1 can be seen from the fact that:
$\llangle Q_{M}^{2}\rrangle_{c}-\llangle W^{2}\rrangle_{c}=(\llangle W\rrangle+\llangle Q_{M}\rrangle)(2\nu_{1}+\llangle Q_{C}\rrangle\tanh(\beta\nu_{1}))\coth(\beta\nu_{1})$. Thus, we see from the heat engine conditions that $\llangle Q_{M}^{2}\rrangle_{c}>\llangle W^{2}\rrangle_{c}$. 
\subsection{The influence of the parameters on work and its reliability and efficiency}
Consider now the case when $\zeta=\delta'$, and let's now analyze the effect of the parameters on our main quantities. From Eqs. (\ref{qma})-(\ref{qmv})-(\ref{qca})-(\ref{qcv})-(\ref{wa})-(\ref{wv}), we have the next conclusions:

\begin{enumerate}
\item \textit{ Influence of  $\beta:$} Independently of the other parameters, we see that increasing  $\beta$ increases work average $\llangle W\rrangle$ and heat absorbed $\llangle Q_{M}\rrangle$, but note that efficiency $\llangle \eta\rrangle$ is always independent of the inverse temperature. The latter is because the inverse temperature has the same effect on work and heat averages; thus, when taking their ratio, the influence cancels out. 

For fluctuations of work and heat, we see that increasing $\beta$ decreases them. The latter can be explained by the fact that while the average work is temperature-dependent, the second moment is not. Fluctuations of a given stochastic quantity are defined by its second moment, $\textit{minus}$ the square of its average. So when increasing $\beta$, we increase the averages without affecting the second moments, thus decreasing fluctuations. This is also in agreement with the intuition that when we lower temperature, i.e., increase $\beta$, the outcomes of the thermodynamic quantities become less random. To make this clear, consider, e.g., the fluctuations of $Q_{M}$. From Eq. (\ref{qmv}), we see that the second moment (i.e., $4\theta\nu_{2}^{2}$) is inverse temperature independent, while the inverse temperature dependency only comes from the square of the average of $Q_{M}$.

From the fact that increasing $\beta$, increases work and decreases its fluctuations, we conclude that increasing $\beta$ would increase the reliability of work, which is \textit{desirable}.

\item \textit{ Influence of  $\delta'$:} In figure \ref{XXCC}, we plot the work average, efficiency, fluctuations of work, and work reliability as a function of $\delta'$ for five values of $\theta$. We see that increasing $\delta'$ decreases work, decreases efficiency, increases work fluctuations, and decreases the reliability of $W$. More specifically, we see that the highest values of work, efficiency, and work reliability are achieved only when $\delta'=0$, and this is independent of the parameters. This shows that non-adiabatic transitions are detrimental to the important thermodynamic quantities of the engine. Already, we proved in Ref. \cite{Abdelkader} that the highest possible efficiency is that of the Otto, and it is achieved in the adiabatic regime. One can prove the same for work and reliability.

\item $\textit{ Influence of } \phi:$ Now consider the unitaries (\ref{Unit}) and (\ref{Unit2}) considered in Sec. \ref{casestu}. First note that since $\delta'$ (Eq. (\ref{deltap})) is independent of $\phi$, the CF (Eq. (\ref{chiga})) is also independent of it. This means that the cumulants are also independent of the phase $\phi$. This is because $\phi$ has an influence only on the off-diagonal elements of the state $\rho_{2}$. And because the latter state is dephased, the phase $\phi$ will not influence the next states, thus having no influence on the cumulants of work and heats.

\item $\textit{ Influence of } \theta:$ From figure \ref{XXCC}, we see that when we increase $\theta$ towards 1/2, it has a positive influence on our work and its reliability and efficiency. We only consider $\theta\leq1/2$ since for the quantum projective measurement channel in Sec. \ref{casestu} we have $0\leq \theta\leq1/2$. Further, the difference between work at $\theta=1/2$ and $0\leq\theta<1/2$ is given by:
\begin{equation}
\llangle W\rrangle_{\theta=1/2}-\llangle W\rrangle=(1-\delta')(1-2\theta)(\nu_{2}-\nu_{1}+2\nu_{1}\delta')\tanh(\beta\nu_{1}).
\end{equation}
From $0\leq\theta\leq1/2$, $\nu_{2}\geq\nu_{1}$, and the condition $\delta'\leq1/2$ for the system to work as a heat engine, we see that $\llangle W\rrangle_{\theta=1/2}\geq\llangle W\rrangle$. From the influence of $\delta'$ and $\theta$, one can show that the maximal amount of the extracted work is achieved when $\theta=1/2$ and $\delta'=0$ and is given by:
\begin{equation}
\llangle W\rrangle=(\nu_{2}-\nu_{1})\tanh(\beta\nu_{1}).\label{maxw}
\end{equation}
For efficiency we have,
\begin{equation}
\llangle \eta\rrangle_{\theta=1/2}-\llangle \eta\rrangle=\frac{4(1-\delta')\delta'(1-2\theta)\nu_{1}}{(1-2\delta')\theta\nu_{2}}\geq0.
\end{equation}
From this equation, we see that in the heat engine region, we have $\llangle \eta\rrangle_{\theta=1/2}\geq\llangle \eta\rrangle$. 

\item $\textit{ Influence of } \nu_{2}:$ Fixing $\nu_{1}$ and independently of the other parameters, one can see that increasing $\nu_{2}$ has a positive influence on work and efficiency. This is because when we fix $\nu_{1}$ and increase $\nu_{2}$, we increase the heat absorbed without affecting the heat released, thus enhancing work and efficiency. On the other hand, increasing $\nu_{2}$ also increases the fluctuations. However, numerically, one can show that $\nu_{2}$ has a positive influence on the reliability of work; that is, even though it increases fluctuations, it also increases the average, such that reliability increases as we increase $\nu_{2}$. Furthermore, from Eqs. (\ref{qma})-(\ref{qmv})-(\ref{qca})-(\ref{qcv}), note that the RFs of $Q_{M}$ and $Q_{C}$ are independent of $\nu_{2}$, thus increasing $\nu_{2}$ does not influence the reliability of the heats. 

\end{enumerate}

\begin{figure}[hbtp]
\centering
\includegraphics[scale=0.9]{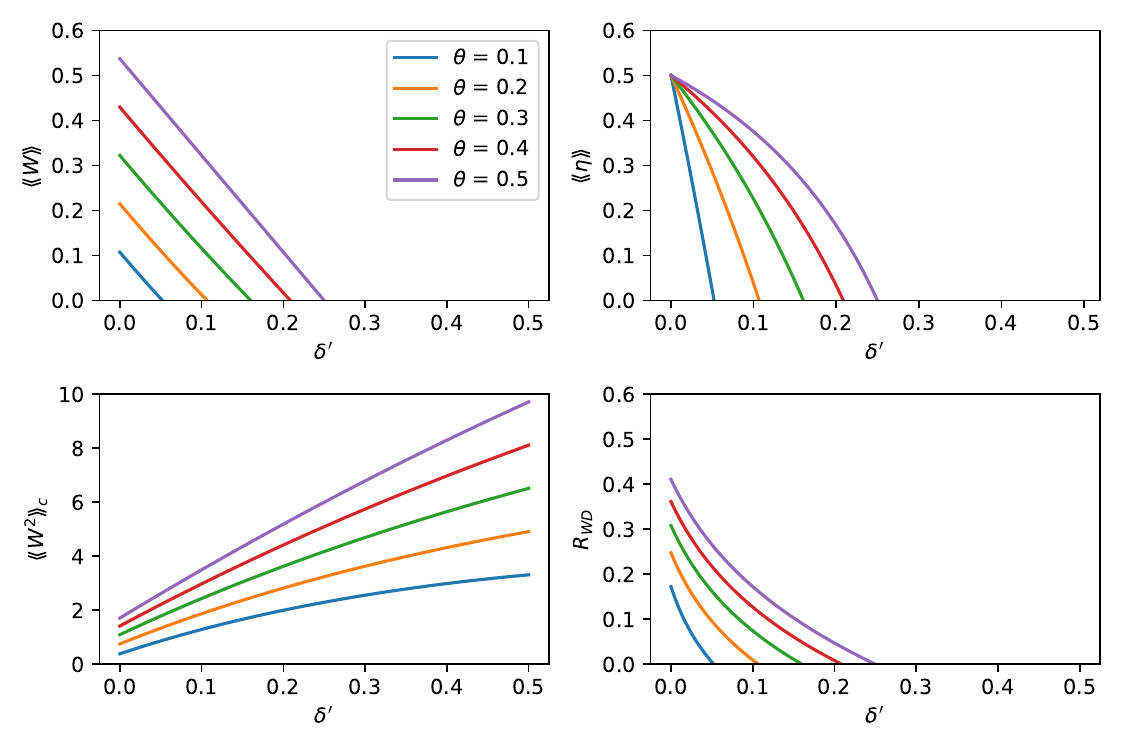}
\caption{Plot of work $\llangle W\rrangle$, efficiency $\llangle \eta\rrangle$, work fluctuations $\llangle W^{2}\rrangle_{c}$, and work reliability $R_{WD}$, as a function of $\delta'$ for five values of $\theta$. The other parameters are: $\beta=0.6$, $\nu_{1}=1$, and $\nu_{2}=2$.}
\label{XXCC}
\end{figure}

\subsection{Upper bounds on work reliability}
In Ref. \cite{Abdelkader2}, we proved that the relative fluctuations of work and heats verify Eq. (\ref{mqcqhw1}). However, one can easily see that the bound $2/\llangle \Sigma\rrangle-1$ becomes useless when $\llangle \Sigma\rrangle$ exceeds 2, since this lower bound $2/\llangle \Sigma\rrangle-1$ becomes negative. More precisely, note that in the lower temperature regime, i.e., when $\beta\rightarrow +\infty$, the average entropy production $\llangle \Sigma\rrangle$ diverges, thus the lower bound in Eq. (\ref{mqcqhw1}) goes to -1, i.e., becomes informationless, since the RFs of work are already $\geq$0 by definition. By defining the reliability of the stochastic heat $Q_{M}$ to be $R_{Q_{M}D}:=\llangle Q_{M}\rrangle/\sqrt{\llangle Q_{M}^{2}\rrangle_{c}}$, one can prove that work reliability is bounded as follows:
\begin{theorem}
For arbitrary $U$, $V$, and $\Phi$ that satisfy $\zeta=\delta'$ and $0\leq\theta\leq1/2$, one can show that work reliability is bounded as follows: 
\begin{equation} 
R_{WD}\leq R_{Q_{M}D}\leq 1. 
\end{equation} 
And the upper bound 1 is attained in the adiabatic limit.
\end{theorem}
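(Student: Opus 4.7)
The plan is to prove the two inequalities separately, leveraging the closed-form cumulants (\ref{qma})--(\ref{wv}) and the ratio-of-fluctuations chain (\ref{mqcqhw1}) already established in Ref.~\cite{Abdelkader2}. For the outer bound $R_{Q_{M}D}\le 1$, I would start from the definition and substitute (\ref{qma}) and (\ref{qmv}), rewriting $R_{Q_{M}D}^{2}$ as the ratio of $\llangle Q_{M}\rrangle^{2}$ over $4\theta\nu_{2}^{2}-\llangle Q_{M}\rrangle^{2}$. After cancelling the common factor $4\theta\nu_{2}^{2}$, the inequality collapses to the elementary statement $(1-2\delta')^{2}\,\theta\,\tanh^{2}(\beta\nu_{1})\le 1/2$. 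Since $\delta'\in[0,1]$ gives $(1-2\delta')^{2}\le 1$, the hypothesis gives $\theta\le 1/2$, and $\tanh^{2}\le 1$ always, the product of the three factors is at most $1/2$, with equality only when every factor saturates. In particular the adiabatic choice $\delta'=0$ saturates the first factor, so together with $\theta=1/2$ and $\beta\nu_{1}\to\infty$ the upper bound $R_{Q_{M}D}=1$ is attained in the adiabatic limit.

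For the inner bound $R_{WD}\le R_{Q_{M}D}$, I would appeal to (\ref{mqcqhw1}), which under $\zeta=\delta'$ already supplies $\llangle W^{2}\rrangle_{c}/\llangle W\rrangle^{2}\ge \llangle Q_{M}^{2}\rrangle_{c}/\llangle Q_{M}\rrangle^{2}$ in the heat-engine region. Both $\llangle W\rrangle$ and $\llangle Q_{M}\rrangle$ are strictly positive in that region (since $\delta'<1/2$ and the heat-engine operation conditions fix the signs of the relevant tanh factors), so taking reciprocals reverses the inequality and taking positive square roots preserves it, giving $R_{WD}\le R_{Q_{M}D}$. Chaining this with the first step completes the proof of the full bound $R_{WD}\le R_{Q_{M}D}\le 1$.

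The main obstacle is not algebraic but logical: the passage from a ratio-of-variances inequality to a reliability inequality uses a reciprocal and a square root, both of which are valid only when the averages in the denominators are positive. Within the hypotheses of the theorem and the implicit heat-engine regime this positivity is automatic, but it must be verified before extracting the square root, otherwise the direction of the inequality could flip. Once this sign bookkeeping is settled, the theorem reduces, as sketched above, to purely elementary manipulations of the closed-form first and second cumulants of $Q_{M}$ together with the already-established chain (\ref{mqcqhw1}).
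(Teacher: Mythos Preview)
Your argument is correct. Both inequalities are established by valid manipulations, and the saturation conditions $\delta'=0$, $\theta=1/2$, $\beta\nu_{1}\to\infty$ are identified correctly.

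There is a mild difference in strategy for the outer bound $R_{Q_{M}D}\le 1$. The paper does \emph{not} attack the $Q_{M}$ cumulants directly; instead it computes
\[
\llangle Q_{C}^{2}\rrangle_{c}-\llangle Q_{C}\rrangle^{2}=4\nu_{1}^{2}\bigl(1-\theta+2\delta'(1-\delta')(1-2\theta)\tanh^{2}(\beta\nu_{1})\bigr)\bigl(\theta+2\delta'(1-\delta')(1-2\theta)\bigr)\ge 0,
\]
so that $\llangle Q_{C}^{2}\rrangle_{c}/\llangle Q_{C}\rrangle^{2}\ge 1$, and then feeds this into the \emph{full} chain (\ref{mqcqhw1}) to obtain both $R_{WD}\le R_{Q_{M}D}$ and $R_{Q_{M}D}\le 1$ in one stroke. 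Your route bypasses $Q_{C}$ entirely for the outer bound and reduces $R_{Q_{M}D}^{2}\le 1$ to the elementary product inequality $(1-2\delta')^{2}\,\theta\,\tanh^{2}(\beta\nu_{1})\le 1/2$, using only Eqs.~(\ref{qma})--(\ref{qmv}). This is shorter and makes the saturation conditions transparent factor by factor; the paper's route has the compensating virtue of exploiting the already-proved chain uniformly, and of exhibiting the stronger fact that even the $Q_{C}$ reliability is bounded by $1$. For the inner inequality and the sign bookkeeping (positivity of $\llangle W\rrangle$, $\llangle Q_{M}\rrangle$ in the heat-engine region before taking reciprocals and square roots) your treatment coincides with the paper's.
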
 \label{theorem1E}

\begin{proof} We have:
\begin{equation}
\llangle Q_{C}^{2}\rrangle_{c}-\llangle Q_{C}\rrangle^{2}=4\nu_{1}^{2}(1-\theta+2\delta'(1-\delta')(1-2\theta)\tanh^{2}(\beta\nu_{1}))(\theta+2\delta'(1-\delta')(1-2\theta))\geq0.
\end{equation}
Where the inequality follows from the fact that $\theta\leq1/2$. From $\llangle Q_{C}^{2}\rrangle_{c}-\llangle Q_{C}\rrangle^{2}\geq0$ we have $\llangle Q_{C}^{2}\rrangle_{c}/\llangle Q_{C}\rrangle^{2}\geq1$. Using this proved fact and plugging it into Eq. (\ref{mqcqhw1}), one obatins
\begin{equation}
\frac{\llangle W\rrangle^{2}}{\llangle W^{2}\rrangle_{c}}\leq\frac{\llangle Q_{M}\rrangle^{2}}{\llangle Q_{M}^{2}\rrangle_{c}}\leq\frac{\llangle Q_{C}\rrangle^{2}}{\llangle Q_{C}^{2}\rrangle_{c}}\leq1
\end{equation}
Therefore in the heat engine region, i.e., $\llangle W\rrangle>0$, the reliability of work $R_{WD}$ is upper-bounded as follows:
\begin{equation}
 R_{WD}\leq R_{Q_{M}D}\leq 1. \label{wrqm}
\end{equation}
In the adiabatic limit, one can show that we have,
\begin{equation}
\llangle W^{2}\rrangle_{c}-\llangle W\rrangle^{2}=4\theta(\nu_{2}-\nu_{1})^{2}(1-2\theta\tanh^{2}(\beta\nu_{1}))\geq0.
\end{equation}
Therefore, one can easily show that the highest bound of the work reliability, i.e., 1, can be reached when $\theta=1/2$ and $\beta=+\infty$. In this case, we have $R_{WD}=R_{Q_{M}D}=1$.
\end{proof}

Because the unitaries and the unital channel of Sec. \ref{casestu} satisfy $\zeta=\delta'$ and $0\leq\theta\leq1/2$,
from this theorem, we see that the work reliability is upper-bounded by 1. From Eq. (\ref{mqcqhw1}) one can also derive an upper bound on work reliability using $2/\llangle \Sigma\rrangle-1$. However, this bound has the drawback that it becomes undefined when the average entropy production exceeds 2. On the other hand, our upper new bound does not suffer from this issue.

\subsection{The main features of the dephased engine}\label{mfde}
Let's conclude this section by stating all the main features of the dephased engine:

\begin{enumerate} 
\item For $0\leq\delta'\leq1/2$, the system can either work as a heat engine or an accelerator. For $\delta'\geq1/2$, only a heater is possible.

\item The average work is $\leq0$ when $\nu_{2}=\nu_{1}$. Thus, we need $\nu_{2}>\nu_{1}$ for a positive work condition. See our previous work in Ref. \cite{Abdelkader}.

\item The maximum of work, efficiency, and work reliability are achieved when $\delta'=0$, $\theta=1/2$, and $\beta=+\infty$. The maximum of work, efficiency, and work reliability are, respectively: $\llangle W\rrangle=\nu_{2}-\nu_{1}$ (see Eq. (\ref{maxw})), $\llangle\eta\rrangle=1-\nu_{1}/\nu_{2}$, and $R_{WD}=1$. What this shows is that our three important quantities are strictly monotonically decreasing as we increase $\delta'$ towards 1/2; see, e.g., Fig. \ref{XXCC}.

\item From Eqs. (\ref{pi1})-(\ref{pi2}), the expression of $\theta$ (i.e., Eq. (\ref{tethaaaa})) is given by, 
\begin{equation} \theta=(1-\cos^{2}(\chi)\sin^{2}(\alpha))/2(\leq1/2).
\label{thetta} 
\end{equation}
When $\theta=1/2$, this corresponds to the case when we set $\chi=\pi/2$ and let $\alpha$ be arbitrary. This is nothing but the $\textit{yz}$-plane in the Bloch sphere; see figure \ref{blochh}. In this case, we see that any quantum projective measurement in the $\textit{yz}$-plane is better.

\item In general, we see that lowering $\delta'$ towards 0, increasing $\theta$ towards 1/2, increasing $\beta$ (i.e., $\textit{low-temperature regime}$), and increasing $\nu_{2}$ have a positive influence on work and its reliability and efficiency. On the other hand, the phase $\phi$ does not influence the cumulants because of the measurement between the strokes. 
\end{enumerate}

\section{undephased engine}\label{undeunde}
In the previous section, we showed in detail which parameters should be increased and which should not, for a positive influence on our three main quantities. In this section, we give our analytical and numerical results for the undephased engine, and we show in detail which features are present, like in the case of the dephased engine, and what advantage quantum coherence can provide.
\subsection{Expression of the average work and heat and the contributions coming from coherence}

Before we compute the expressions of work and heat for arbitrary unitaries and unital channels, considering the unitaries and the unital channel in Sec. \ref{casestu}, the expression of the heat absorbed $\langle Q_{M}\rangle$ for the undephased engine is given by
\begin{equation}
\begin{split}
 \langle Q_{M}\rangle& =(1-2\delta)(1-\cos^{2}(\chi)\sin^{2}(\alpha))\nu_{2}\tanh(\beta\nu_{1}) 
\\ &
+\nu_{2}\sqrt{\delta(1-\delta)}\sin(\alpha)(\sin(\alpha)\sin(\phi)\sin(2\chi)-2\cos(\alpha)\cos(\phi)\cos(\chi))\tanh(\beta\nu_{1}) .\label{QMA}
\end{split}
\end{equation}
By plugging the expression of $\theta$ (Eq. (\ref{thetta})) into Eq. (\ref{qma}), we see that the first term in Eq. (\ref{QMA}) is the heat absorbed by the dephased engine. Let's now show the origin of the second term in Eq. (\ref{QMA}). 

Consider the general case, i.e., $U$, $V$, and $\Phi$ are arbitrary. The expression of $\rho_{2}$ when written in the eigenbasis of $H_{2}$ is given by
\begin{equation}
\rho_{2}:=\mathbb{1}_{2}\rho_{2}\mathbb{1}_{2}=(|+\rangle_{22}\langle+|+|-\rangle_{22}\langle-|)\rho_{2}((|+\rangle_{22}\langle+|+|-\rangle_{22}\langle-|))=\Delta_{2}(\rho_{2})+Off_{2}(\rho_{2}).\label{decomposi}
\end{equation}
$\Delta_{2}(\rho_{2})={}_{2}\langle +|\rho_{2}|+\rangle_{2}|+\rangle_{22}\langle+|+{}_{2}\langle -|\rho_{2}|-\rangle_{2}|-\rangle_{22}\langle-|$ and $Off_{2}(\rho_{2})={}_{2}\langle +|\rho_{2}|-\rangle_{2}|+\rangle_{22}\langle-|+{}_{2}\langle -|\rho_{2}|+\rangle_{2}|-\rangle_{22}\langle+|$, are respectively, the diagonal and off-diagonal elements of $\rho_{2}$ in the eigenbasis of $H_{2}$. Now let's return to the average $\langle Q_{M}\rangle$. Mathematically, we have
\begin{equation}
\begin{split}
\langle Q_{M}\rangle  & =\mathrm{Tr}\left[(\rho_{3}-\rho_{2})H_{2}\right] 
\\ &
 =\mathrm{Tr}\left[(\Phi(\rho_{2})-\rho_{2})H_{2}\right]
\\ &=\mathrm{Tr}\left[(\Phi(\Delta_{2}(\rho_{2})+Off_{2}(\rho_{2}))-\rho_{2})H_{2}\right]
\\ &=\mathrm{Tr}\left[(\Phi (\Delta_{2}(\rho_{2})) -\rho_{2})H_{2}\right]+\mathrm{Tr}\left[\Phi (Off_{2}(\rho_{2}))H_{2}\right]
\\ &= \llangle Q_{M}\rrangle+\mathrm{Tr}\left[\Phi (Off_{2}(\rho_{2}))H_{2}\right].
\end{split}\label{QMA2}
\end{equation}
Where in the second line we use $\rho_{3}=\Phi(\rho_{2})$, in the third line we employ Eq. (\ref{decomposi}), in the fourth line we use the linearity property of the trace and of $\Phi$, and in the last line we use the first equation in Eq. (\ref{FFF}). Now it is clear that the second term in Eq. (\ref{QMA}) comes from $\mathrm{Tr}\left[\Phi (Off_{2}(\rho_{2}))H_{2}\right]$. Further, note that this latter term contributes to heat only when $\delta'\neq0$, i.e., in the non-adiabatic regime. This is because when $\delta'=0$, then in this case $\rho_{2}$ would be diagonal in the eigenbasis of $H_{2}$.

Similarly, to $\langle Q_{M}\rangle$ for $\langle Q_{C}\rangle$ we have,
\begin{equation}
\begin{split}
\langle Q_{C}\rangle  & =\mathrm{Tr}\left[(V\Phi(\rho_{2})V^{\dagger}-\rho_{2})H_{2}\right]
\\ &
=\mathrm{Tr}\left[(V(\Delta_{2}(\Phi(\rho_{2}))+Off_{2}(\Phi(\rho_{2})))V^{\dagger}-\rho_{2})H_{2}\right]
\\ &=\mathrm{Tr}\left[(V(\Delta_{2}(\Phi(\Delta_{2}(\rho_{2})))+Off_{2}(\Phi(\Delta_{2}(\rho_{2})))+\Delta_{2}(\Phi(Off_{2}(\rho_{2})))+Off_{2}(\Phi(Off_{2}(\rho_{2}))))V^{\dagger}-\rho_{2})H_{2}\right]
\\ &=\mathrm{Tr}\left[(V(\Delta_{2}(\Phi(\Delta_{2}(\rho_{2}))))V^{\dagger}-\rho_{2})H_{2}\right]+\mathrm{Tr}\left[ V(Off_{2}(\Phi(\Delta_{2}(\rho_{2})))+\Delta_{2}(\Phi(Off_{2}(\rho_{2})))+Off_{2}(\Phi(Off_{2}(\rho_{2}))))V^{\dagger}H_{2}\right]
\\ &= \llangle Q_{C}\rrangle+\mathrm{Tr}\left[ V(Off_{2}(\Phi(\Delta_{2}(\rho_{2})))+\Delta_{2}(\Phi(Off_{2}(\rho_{2})))+Off_{2}(\Phi(Off_{2}(\rho_{2}))))V^{\dagger}H_{2}\right].
\end{split}\label{QCA}
\end{equation}
In the second line, we decompose the state $\Phi(\rho_{2})$ in the eigenbasis of $H_{2}$ as we did for $\rho_{2}$. In the third line, we use the decomposition of the state $\rho_{2}$ in the eigenbasis of $H_{2}$, and in the last line, we use the second equation in Eq. (\ref{FFF}). From equations (\ref{QCA}) and (\ref{QMA}), the average work is given by,
\begin{equation}
\langle W\rangle=\llangle W\rrangle+\mathrm{Tr}\left[\Phi (Off_{2}(\rho_{2}))H_{2}\right]+\mathrm{Tr}\left[ V(Off_{2}(\Phi(\Delta_{2}(\rho_{2})))+\Delta_{2}(\Phi(Off_{2}(\rho_{2})))+Off_{2}(\Phi(Off_{2}(\rho_{2}))))V^{\dagger}H_{2}\right].\label{opm}
\end{equation}
Please note that the expressions of work and heats in Eqs. (\ref{QMA2})-(\ref{QCA})-(\ref{opm}) are valid for any finite $d$, i.e., not only for qubit systems, since the decomposition (\ref{decomposi}) can be generalized to arbitrary finite dimension $d$. These expressions show that when we are monitoring the system between the strokes, the coherence generated in the energy eigenbasis is removed.

Now let's compute Eqs. (\ref{QMA2})-(\ref{QCA})-(\ref{opm}) for qubit systems in terms of the parameters. First, in addition to $\delta'$, $\theta$, and $\zeta$, we define the next two transition probabilities, $\theta_{c}$ and $\zeta^{c}$, which are given by,
\begin{equation}
\theta_{c}:={}_{2}\langle -|\Phi(U|+\rangle_{11}\langle +|U^{\dagger})|-\rangle_{2},\label{teth}
\end{equation}
and,
\begin{equation}
\zeta^{c}:={}_{1}\langle -|V\Phi(U|+\rangle_{11}\langle +|U^{\dagger})V^{\dagger}|-\rangle_{1}.\label{zet}
\end{equation}
Their interpretation is similar to $\delta'$, $\theta$, and $\zeta$. Using them, one can compress the work and heat averages into simpler expressions, given by
\begin{equation}
\langle Q_{C}\rangle=-2\zeta^{c}\nu_{1}\tanh(\beta\nu_{1}),\label{qcun1}
\end{equation}
\begin{equation}
\langle Q_{M}\rangle=2(\theta_{c}-\delta')\nu_{2}\tanh(\beta\nu_{1}),\label{qmun1}
\end{equation}
and,
\begin{equation}
\langle W\rangle=2((\theta_{c}-\delta')\nu_{2}-\zeta^{c}\nu_{1})\tanh(\beta\nu_{1}).\label{wun1}
\end{equation}
See Appendix \ref{app1r} for details about the derivation. Please note that Eqs. (\ref{teth})-(\ref{zet}) are not independent of $\delta'$, $\theta$, and $\zeta$. We define them to compress the expressions of the averages into simpler formulas. Their role would become even more important when we consider the second cumulants.
\subsection{Variance of $W$, $Q_{M}$, and $Q_{C}$}
In Appendix \ref{app2r}, we give in detail the derivation of the variances. In addition to $\delta'$, $\theta$, $\zeta$, $\theta_{c}$, and $\zeta^{c}$, we define another transition probability given as follows:
\begin{equation}
\zeta_{c}:={}_{1}\langle -|V(\Phi(|+\rangle_{22}\langle +|))V^{\dagger}|-\rangle_{1}.\label{zetac}
\end{equation}
Again, note that $\zeta_{c}$ can be written in terms of $\theta$ and $\zeta$. Now the expressions of the variances are given as follows:
\begin{equation}
\langle Q_{C}^{2}\rangle_{c}=4\zeta^{c}\nu_{1}^{2}-\langle Q_{C}\rangle^{2},\label{resu11}
\end{equation}
\begin{equation}
\mathrm{Re}[\langle Q_{M}^{2}\rangle_{c}]=4\theta\nu_{2}^{2}-\langle Q_{M}\rangle^{2}=4\nu_{2}^{2}(\theta-((\theta_{c}-\delta')\tanh(\beta\nu_{1}))^{2}),\label{resu22}
\end{equation}
and,
\begin{equation}
\mathrm{Re}[\langle W^{2}\rangle_{c}]=4\nu_{1}\nu_{2}(\delta'+\zeta-\theta_{c}-\zeta_{c})+4(\nu_{1}^{2}\zeta^{c}+\nu_{2}^{2}\theta)-\langle W\rangle^{2}.\label{resu33}
\end{equation}

In the previous sections, we have mentioned that the second cumulants of work and heat $Q_{M}$ following from Eq. (\ref{pnmkl2}) have a real and imaginary part. One can show that because of the next averages: $\langle E_{2}E_{3}\rangle$, $\langle E_{2}E_{4}\rangle$, and $\langle E_{3}E_{4}\rangle$, the second cumulants of work $W$ and heat $Q_{M}$, have real and imaginary parts. On the other hand, the next three averages, $\langle E_{1}E_{2}\rangle$, $\langle E_{1}E_{3}\rangle$, and $\langle E_{1}E_{4}\rangle$, have only a real part, thus the second cumulant of $Q_{C}$ would be real. And because the second cumulant of work has an imaginary part, we defined the reliability of the undephased engine only with the real part. On the other hand, numerically, we found that the imaginary part can be negative. 

Concerning the fluctuations of the heat released to the cold bath denoted by $Q_{C}'$, see Appendix \ref{mpmpm}. Please, note that we do not need to redefine $Q_{C}'$, since the result $-$ that the $l$th cumulant of $Q_{C}'$ is the sum of the $l$th cumulants of $E_{1}$ and $E_{4}$ $-$ proven in Ref. \cite{Abdelkader2} is valid here. Thanks to the assumption that the bath maps every state of the system to the equilibrium state.
 
Finally, one can prove that $\langle Q_{C}^{2}\rangle_{c}/\langle Q_{C}\rangle^{2}\geq 2/\langle \Sigma\rangle-1$, see Appendix \ref{lowerrr}. In the same appendix, we prove that $\langle Q_{C}^{2}\rangle_{c}/\langle Q_{C}\rangle^{2}\geq1$ when $0\leq\zeta^{c} \leq1/2$. Now let's look at the influence of the parameters on our main quantities.

\subsection{Effect of the parameters on the undephased engine}\label{3azwa}

\begin{enumerate}

\item $\textit{Influence of}$ $\beta:$ The same as for the dephased engine. Thus, we see that the temperature of the bath does not give us a difference between the dephased and undephased engines. The important thing is that lowering the temperature of the bath has a good influence on average work and its reliability. That is, when we initialize the qubit in the ground state, we are getting rid of thermal fluctuations. In this case, if we consider the model in Sec. \ref{casestu}, the engine is purely driven by $\textit{quantum fluctuations}$, i.e., the unitaries and the projective measurement.
Of course, initializing the qubit in the ground state is not that simple assumption, since in this case, the third law comes into play, which would prevent us from doing this using a finite amount of resources, such as time and energy. Nevertheless, from a computational point of view, there is no problem assuming zero temperature.

\item $\textit{Influence of}$ $\nu_{2}:$ Concerning $\nu_{2}$, one can see from Eqs. (\ref{qcun1})-(\ref{qmun1})-(\ref{wun1}) that still only the heat absorbed and work are dependent on $\nu_{2}$. This dependence is linear. Thus, when increasing $\nu_{2}$, we increase work and efficiency. 

Now consider the model of Sec. \ref{casestu} as a case study. Numerically, one can show that increasing $\nu_{2}$ has a positive influence on work reliability. Further, note that the condition $\nu_{2}>\nu_{1}$ is no longer necessary for the system to work as a heat engine. See Refs. \cite{Abdelkader,Shanhe,JordanNM}, where a heat engine is possible also when $\nu_{2}=\nu_{1}$. Finally, note that the RFs of $Q_{M}$ and $Q_{C}$ are also independent of $\nu_{2}$.

\begin{figure}[hbtp]
\centering
\includegraphics[scale=0.45]{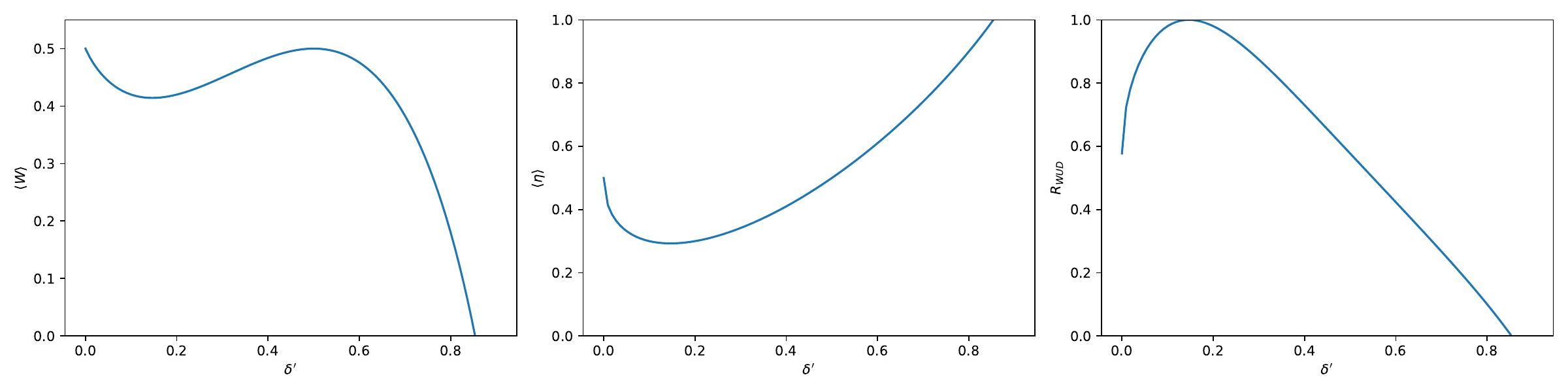}
\caption{Plot of the average work $\langle W\rangle$, efficiency $\langle \eta\rangle$, and work relibality $R_{WUD}$ as a function of $\delta'$ for $\nu_{1}=1$, $\nu_{2}=2$, $\chi=\phi=0$, $\beta=10$, and $\alpha=3\pi/4$. We see that they can increase or decrease as we increase $\delta'$. Note that in this plot we are considering the qubit model of Sec. \ref{casestu}. In this case, we have $\delta'=\zeta=\delta$.}\label{therm}
\end{figure}

\begin{figure}[hbtp]
\centering
\includegraphics[scale=0.354]{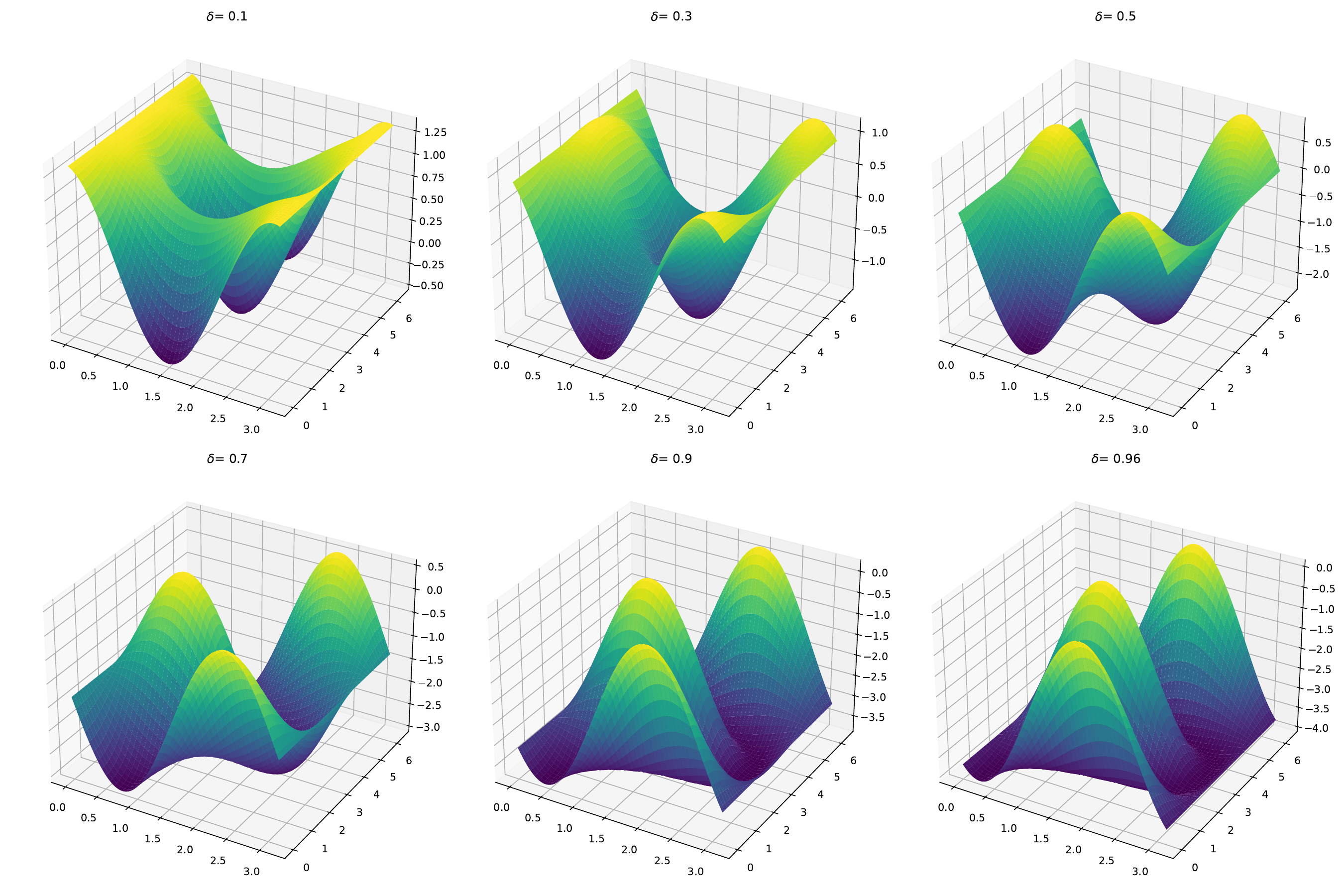}
\caption{Plot the average work $\langle W\rangle$ as a function of $\alpha\in[0,\pi]$ and $\chi\in[0,2\pi]$ for $\beta=+\infty$, $\nu_{1}=1$, $\nu_{2}=2$, $\phi=0$, $\delta=0.1,0.3,0.5,0.7,0.9$, and $0.96$.}\label{ggg}
\end{figure}

\item $\textit{Influence of}$ $\phi :$ Let's comment on the effect of this phase. From extensive numerical analysis, not necessarily to be presented here, we found that taking $\phi=0$ has a better influence on the higher values of work, its reliability, and efficiency. Of course, sometimes $\phi\neq0$ has a better influence on our main thermodynamic quantities. But our point is that the highest possible values of work, efficiency, and work reliability are only achieved when we set $\phi=0$. Please note that the highest value of efficiency and work reliability is 1. Already, we proved this for efficiency in Section \ref{casestu}. Below, we comment on the upper bound of work reliability. 

\item $\textit{Influence of}$ $\delta':$ In the case of the dephased engine, we found that heat, work, efficiency, and work reliability are monotonically decreasing functions as we increase $\delta'$ towards 1/2. For the undephased engine, this is no longer the case; see Fig. \ref{therm}. For example, from the average of the heat absorbed, Eq. (\ref{QMA}), we see that while the first term is monotonically decreasing as we increase $\delta'$ towards $1/2$, the second can be increasing depending on the three angles $\phi$, $\chi$, and $\alpha$. That is, while the highest value of the first term in Eq. (\ref{QMA}) is at $\delta'=0$, the second term in Eq. (\ref{QMA}) is attained when $\delta'=1/2$. Furthermore, from Fig. \ref{therm}, note that a heat engine is possible when $\delta'\geq1/2$ due to the term $\mathrm{Tr}\left[\Phi (Off_{2}(\rho_{2}))H_{2}\right]$ in Eq. (\ref{QMA}).

\item $\textit{Influence of}$ $\chi$ $\textit{and}$ $\alpha :$ In figure \ref{ggg}, we plot the average work as a function of $\alpha$ and $\chi$ for $\beta=+\infty$, $\nu_{1}=1$, $\nu_{2}=2$, $\phi=0$, $\delta=0.1,0.3,0.5,0.7,0.9$ and $0.96$. From this figure, we have the next two observations: 1) The highest possible value of work is achieved when we project the qubit in some basis in the \textit{xz}-plane. For example, for $\delta=0.1$, this value is approximately 0.97. 2) We see that the more we increase $\delta$ towards 1, the more it is better to project the qubit close to the \textit{x}-basis than to the \textit{z}-basis. By close, we mean that the angle between the best basis in the \textit{xz}-plane (i.e., the basis where the average work is maximal)  and the \textit{x}-basis is less than the angle between this best basis in the \textit{xz}-plane and the \textit{z}-basis.

\end{enumerate}

We have repeated the same plots as those of Fig. \ref{ggg} for average work for $\phi=\pi/2$, and we found that in this case, the best plane is the \textit{xy}-plane. However, note that a positive work condition was only verified for $\delta=0.1$ and not for $\delta$=0.3, 0.5, 0.7, 0.9, and 0.96. Further, for $\delta=0.1$, the highest possible value of work was found to be 0.6, which is smaller than the best achieved in the case when $\phi=0$, which is 0.97.

We have also plotted the efficiency and reliability of work as a function of $\alpha$ and $\chi$, and we found that when we set $\phi=0$, then efficiency and work reliability can achieve their best values also in the \textit{xz}-plane. But note that the maximum of work, its reliability, and efficiency are not necessarily achieved on the same basis in the \textit{xz}-plane. On the other hand, when setting $\phi=\pi/2$, it is better to measure the qubit in the \textit{xy}-plane.

In our work in Ref. \cite{Abdelkader}, we found numerically that a heat engine is possible even when $\delta'\geq1/2$, in contrast to the case when the Otto cycle is based on two completely thermalizing baths. Nevertheless, we did not give the reason behind this. In the next subsection, we explain this. 
\subsection{Why is work extraction possible for $\delta'\geq1/2$ for the undephased engine?}

When $\delta'\geq1/2$, the probability to find the qubit in the excited state of the Hamiltonian $H_{2}$ at $\mathbf{B}$ (in the cycle; see Fig. \ref{Otto}) is given by, $p_{e\mathbf{B}}:=(\langle E_{2}\rangle/\nu_{2}+1)/2=(1-(1-2\delta')\tanh(\beta\nu_{1}))/2$. Thus when $\delta'\geq1/2$, then we have $p_{e\mathbf{B}}\geq1/2$. This means that the excited level of $H_{2}$ at $\mathbf{C}$ must be more populated than at $\mathbf{B}$ in the cycle. This is the minimum condition to ensure that the system will absorb heat. This is not possible using a hot thermal bath with a positive inverse temperature since it can't populate higher levels with higher probabilities than lower levels. Thanks to quantum measurement, we could fuel the system even when $\delta'\geq1/2$. 

On the other hand, note that the dephased engine cannot work for $\delta'\geq1/2$. This is because the projective measurement between the strokes kills coherence. Thus, we see that coherence can be advantageous when present. However, we should mention that even though the dephased engine cannot work for $\delta'\geq1/2$, the dephased engine is different from the case of thermal baths. For example, while the highest probability of occupation of the excited state at $\mathbf{C}$ for the Otto cycle when it is based on two thermal baths is 1/2, the probability of occupation of the higher level of the dephased engine at $\mathbf{C}$ can exceed this. More precisely, at $\mathbf{C}$ in the cycle, the occupation probability of the excited state of the dephased engine is given by,
\begin{equation}
p_{e\mathbf{C}}:=(\llangle E_{3}\rrangle/\nu_{2}+1)/2=(1-(1-2\delta')(1-2\theta)\tanh(\beta\nu_{1}))/2.
\end{equation}
Thus, we see that when $\delta'$ exceeds 1/2, so does $p_{eC}$. However, we have $p_{e\mathbf{C}}\leq p_{e\mathbf{B}}$ for $\delta'\geq1/2$, thus a heat engine is not possible in this regime. Thus, to resume, we see that when the coherence is not erased, a heat engine is possible even in the usually \textit{not allowed regime in the literature}.
\subsection{Comparison between the work extracted in the \textit{x, y, and z} bases of the dephased and undephased engines}
For the qubit model of Sec. \ref{casestu}, we have the next results:

\textit{x}-basis: In this basis, both the dephased and undephased engines cannot extract work since the system cannot absorb heat. Thus, all the work consumed by the engine is transformed into useless heat dumped into the cold bath. In general, we have
\begin{equation}
\langle W\rangle=\llangle W\rrangle=-4\delta(1-\delta)\nu_{1}\tanh(\beta\nu_{1})\leq0.
\end{equation}

\textit{y}-basis: In this basis we have,
\begin{equation}
\langle W\rangle-\llangle W\rrangle=-4\delta(1-\delta)\sin^{2}(\phi)\nu_{1}\tanh(\beta\nu_{1})\leq0.\label{y}
\end{equation}
We see that the dephased engine can enhance the undephased when $\phi\neq0$ or $\pi$, i.e., $\langle W\rangle\leq\llangle W\rrangle$. On the other hand, when $\phi=0$ or $\pi$, we have $\langle W\rangle=\llangle W\rrangle$. When $\nu_{2}=\nu_{1}$ we have,
\begin{equation}
\langle W\rangle=-2 \delta \nu_{1} (1+2(1-\delta) \sin^{2}(\phi))\tanh(\beta\nu_{1})\leq0.
\end{equation}
This shows that when measuring the qubit in the \textit{y}-basis, work cannot be extracted when $\nu_{2}=\nu_{1}$.

\textit{z}-basis: In this basis we have,
\begin{equation}
\langle W\rangle-\llangle W\rrangle=4\delta(1-\delta)\cos^{2}(\phi)\nu_{1}\tanh(\beta\nu_{1})\geq0,
\end{equation}
Thus, $\langle W\rangle\geq\llangle W\rrangle$. Note that they become equal for $\phi=\pi/2$ and $\phi=3\pi/2$. However, differently from the \textit{y}-basis, a heat engine can be possible in the \textit{z}-basis when $\nu_{2}=\nu_{1}$.

\subsection{Not all the bases in the \textit{yz}-plane are equivalent for the undephased engine}
The \textit{yz}-plane corresponds to $\chi=\pi/2$ with $\alpha$ being arbitrary. For the dephased engine, we have seen that all the bases in the \textit{yz}-plane are equivalent, since when $\chi=\pi/2$, $\theta$ is equal to 1/2, independently of the value of $\alpha$. However, this is not the case for the undephased engine. More precisely, we have
\begin{equation}
\langle W\rangle_{\textit{yz-}plane}-\langle W\rangle_{\textit{y-}basis}=4\delta(1-\delta)\nu_{1}\cos^{2}(\alpha)\tanh(\beta\nu_{1}) \geq0,
\end{equation}
and,
\begin{equation}
\langle W\rangle_{\textit{z-}basis}-\langle W\rangle_{\textit{yz-}plane}=4\delta(1-\delta)\nu_{1}\sin^{2}(\alpha)\tanh(\beta\nu_{1}) \geq0.
\end{equation}
Thus, the maximal amount of extracted work in the \textit{yz-plane} is when the qubit is measured in the \textit{z}-basis. Therefore, we have
\begin{equation}
\langle W\rangle_{\textit{z-}basis}\geq\langle W\rangle_{\textit{yz-}plane}\geq\langle W\rangle_{\textit{y-}basis}.\label{xyz}
\end{equation}
From this equation, we see that measuring the qubit close to the \textit{z}-basis is better than close to the \textit{y}-basis. By close, we mean that the angle between the basis on which the qubit is projected and the \textit{z}-basis is small compared to the angle between the considered basis and the \textit{y}-basis. For the heat absorbed, we have
\begin{equation}
\langle Q_{M}\rangle_{\textit{z-}basis}=\langle Q_{M}\rangle_{\textit{yz-}plane}=\langle Q_{M}\rangle_{\textit{y-}basis}=(1-2\delta)\nu_{2}\tanh(\beta\nu_{1}).
\end{equation}
Since the heat absorbed in all bases in the \textit{yz}-plane is the same, and from Eq. (\ref{xyz}) we have,
\begin{equation}
\langle \eta\rangle_{\textit{z-}basis}\geq\langle \eta\rangle_{\textit{yz-}plane}\geq\langle \eta\rangle_{\textit{y-}basis}.\label{sbah}
\end{equation}
From this equation and Eq. (\ref{xyz}), we see that the best basis in the \textit{yz}-plane is the \textit{z}-basis.
\subsection{High values of work and efficiency}
Let's now look at the previous results carefully. From Eqs. (\ref{qmun1})-(\ref{qcun1})-(\ref{wun1}), the efficiency expression is given by,
\begin{equation}
\langle \eta\rangle=\frac{\langle W\rangle}{\langle Q_{M}\rangle}=1+\frac{\langle Q_{C}\rangle}{\langle Q_{M}\rangle}=1-\frac{\nu_{1}}{\nu_{2}}\frac{\zeta^{c}}{\theta_{c}-\delta'}.
\end{equation}
From the expression of $\langle Q_{C}\rangle$ we see the heat released to the cold bath is minimal when $\zeta^{c}\rightarrow0$. While from the expression of $\langle Q_{M}\rangle$ we see the heat absorbed is maximal when $\theta_{c}-\delta'$ is maximal.

From Eq. (\ref{wun1}), we see that for a given $\delta'\nu_{2}$, the more we increase $\theta_{c}\nu_{2}-\zeta^{c}\nu_{1}$, the more work can be extracted. Numerically, for the model of Sec. \ref{casestu}, we found that the best basis that verifies this condition is a part of the \textit{xz}-plane. Further, from $\nu_{2}\geq\nu_{1}$, one can see that work is lower-bounded as follows:
\begin{equation}
\langle W\rangle\geq2((\theta_{c}-\zeta^{c})-\delta')\nu_{2}\tanh(\beta\nu_{1})\geq2((\theta_{c}-\zeta^{c})-\delta')\nu_{1}\tanh(\beta\nu_{1}).\label{iwn}
\end{equation}
For a given $\delta'$, we see that for the lower bound $((\theta_{c}-\zeta^{c})-\delta')\nu_{1}\tanh(\beta\nu_{1})$ to be as higher as possible, the difference $\theta_{c}-\zeta^{c}$ should be increased. When considering the qubit model of Sec. \ref{casestu}, please note that the basis at which work is maximum when $\nu_{2}=\nu_{1}$ is not the same when $\nu_{2}>\nu_{1}$.

Now let's look at the maximum of the efficiency. We already showed that it is upper-bounded by 1. From its expression, we see that for $\langle \eta\rangle$ to be higher, one has the following \textit{two} possibilities:
\begin{enumerate}
\item For $\frac{\zeta^{c}}{\theta_{c}-\delta'}>0$ but still $\frac{\zeta^{c}}{\theta_{c}-\delta'}$ finite, we have to make $\nu_{2}\gg\nu_{1}$.

\item For $\nu_{1}/\nu_{2}>0$, we have to make $\frac{\zeta^{c}}{\theta_{c}-\delta'}\rightarrow 0$.
\end{enumerate}
The first possibility may be challenging experimentally since we need $\nu_{1}/\nu_{2}\rightarrow 0$, \textit{i.e.}, we should increase $\nu_{2}$ to very high values with respect to $\nu_{1}$. Thus, reaching 1 efficiency using this possibility may not be feasible experimentally.

For the second possibility, we have two sub-possibilities: a) For $\theta_{c}-\delta'>0$ (\textit{already satisfied in the heat engine region}), we need to make $\zeta^{c}\rightarrow0$; b) Both $\theta_{c}-\delta'$ and $\zeta^{c}$ go to zero as efficiency $\rightarrow1$. The first sub-possibility means that we can reach 1 efficiency with finite work. In this case, all the heat gets converted into work, and thus no heat is released. For the second sub-possibility, it means that as efficiency converges to 1, work also converges to 0. \textit{Numerically} we found that only the second subpossibility is possible, i.e., the greatest possible efficiency is achieved only in the case when all the heat and work averges converge to 0. This is reminiscent of the case when, e.g., a heat engine reaches the Carnot bound and all the currents converge to zero. In our case, it is 1 that plays the role of \textit{Carnot efficiency}. Furthermore, one can actually show that because $\nu_{2}\geq\nu_{1}$ efficiency is lower-bounded, as follows:
\begin{equation}
\langle \eta\rangle\geq 1-\frac{\zeta^{c}}{\theta_{c}-\delta'}.\label{ietan}
\end{equation}
When considering the qubit model of Sec. \ref{casestu}, contrary to the average work, the basis that maximizes the lower bound of efficiency to reach 1 is the same independently of $\nu_{2}=\nu_{1}$ or $\nu_{2}>\nu_{1}$. This is because when efficiency converges to 1, the ratio $\zeta^{c}/(\theta_{c}-\delta')$ goes to zero independently of whether $\nu_{2}=\nu_{1}$ or $\nu_{2}>\nu_{1}$.

\begin{remark}
Of course, one can achieve unit efficiency with a non-zero value of work. In this case, we should take $\nu_{2}\gg\nu_{1}$, but as we pointed out before, this may be challenging experimentally. 
\end{remark}

\subsection{Do the Eqs. (\ref{mqcqhw1})-(\ref{UppLo}) derived for the dephased engine hold for the undephased engine?}
Let's ask the question about the validity of Eqs. (\ref{mqcqhw1})-(\ref{UppLo}) for the undephased engine.  Let's limit ourselves to the case when $V=U^{\dagger}$. Consider the unital channel in Sec. \ref{casestu}. In this case, one can prove that $\zeta_{c}=\theta_{c}$ and $\delta'=\zeta$; see Appendix \ref{rwqm}. Using these two facts, one can arrive at the next theorem (see Appendix \ref{EERF}).
\begin{theorem}
Consider the unital channel in Sec. \ref{casestu}. When $\zeta_{c}=\theta_{c}$ and $\delta'=\zeta$, one can prove that the ratio of work $W$ and heat $Q_{M}$ fluctuations in the heat engine region satisfy,
\begin{equation}
\frac{\mathrm{Re}[\langle W^{2}\rangle_{c}]}{\mathrm{Re}[\langle Q_{M}^{2}\rangle_{c}]}\leq1.\label{UppLoE}
\end{equation}
\end{theorem}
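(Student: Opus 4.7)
The plan is to establish the inequality by showing $\langle Q_M^2\rangle_c - \langle W^2\rangle_c \geq 0$ in the heat engine region, mirroring the identity the authors used below Eq. (\ref{UppLo}) for the dephased engine. Starting from the explicit expressions (\ref{resu22}) and (\ref{resu33}), I would subtract to obtain
\begin{equation}
\langle Q_M^2\rangle_c - \langle W^2\rangle_c = \langle W\rangle^2 - \langle Q_M\rangle^2 - 4\nu_1\nu_2(\delta'+\zeta-\theta_c-\zeta_c) - 4\nu_1^2\zeta^c,
\end{equation}
and then apply the hypotheses $\zeta=\delta'$ and $\zeta_c=\theta_c$ to collapse the cross term into $8\nu_1\nu_2(\theta_c-\delta')$. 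The first-law identity $\langle W\rangle=\langle Q_M\rangle+\langle Q_C\rangle$ then turns $\langle W\rangle^2-\langle Q_M\rangle^2$ into $\langle Q_C\rangle(\langle W\rangle+\langle Q_M\rangle)$.

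Next I would use the compact formulas (\ref{qmun1}) and (\ref{qcun1}) to trade the transition probabilities for averages: $8\nu_1\nu_2(\theta_c-\delta')=4\nu_1\langle Q_M\rangle\coth(\beta\nu_1)$ and $-4\nu_1^2\zeta^c=2\nu_1\langle Q_C\rangle\coth(\beta\nu_1)$. Combining the resulting terms and applying the first law once more (via $2\langle Q_M\rangle+\langle Q_C\rangle=\langle W\rangle+\langle Q_M\rangle$), the whole quantity should factor exactly as in the dephased case, namely
\begin{equation}
\langle Q_M^2\rangle_c - \langle W^2\rangle_c = \bigl(\langle W\rangle+\langle Q_M\rangle\bigr)\bigl(2\nu_1\coth(\beta\nu_1)+\langle Q_C\rangle\bigr).
\end{equation}

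To finish, I would argue the positivity of each factor separately. In the heat engine region $\langle W\rangle>0$ and $\langle Q_M\rangle>0$, so the first factor is strictly positive. For the second factor, rewrite it as $2\nu_1\bigl(\coth(\beta\nu_1)-\zeta^c\tanh(\beta\nu_1)\bigr)$ using (\ref{qcun1}); since $\zeta^c\in[0,1]$ is a genuine transition probability (arising from Eq. (\ref{zet}) with the projective-measurement channel of Sec. \ref{casestu}) and since $\coth x\geq\tanh x$ for all $x\geq 0$, one has $\zeta^c\tanh(\beta\nu_1)\leq\tanh(\beta\nu_1)\leq\coth(\beta\nu_1)$, so this factor is non-negative. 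Dividing by $\langle Q_M^2\rangle_c$ (positive away from degenerate points) then yields the claim.

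The computation itself is routine once the right grouping is spotted; the main obstacle is precisely the algebraic reorganisation that turns the bilinear combination in $(\delta',\zeta,\theta_c,\zeta_c,\theta,\zeta^c)$ into the symmetric product of two physically meaningful factors. The crucial role of the hypotheses $\zeta=\delta'$ and $\zeta_c=\theta_c$ is exactly to eliminate the asymmetric cross terms in $\langle W^2\rangle_c$ that would otherwise spoil this factorisation, which is why the theorem is restricted to the case $V=U^{\dagger}$ (for the unital channel of Sec. \ref{casestu}), where those identities are guaranteed by Appendix \ref{rwqm}.
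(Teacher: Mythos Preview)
Your proposal is correct and follows essentially the same route as the paper: you compute $\langle Q_{M}^{2}\rangle_{c}-\langle W^{2}\rangle_{c}$, use $\zeta=\delta'$ and $\zeta_{c}=\theta_{c}$ together with Eqs.~(\ref{qcun1})--(\ref{qmun1}) to factor it as $(\langle W\rangle+\langle Q_{M}\rangle)\bigl(2\nu_{1}\coth(\beta\nu_{1})+\langle Q_{C}\rangle\bigr)$, and then argue the positivity of each factor in the heat engine region. The paper's Eq.~(\ref{gth}) is the same identity written as $2\nu_{1}\coth(\beta\nu_{1})(\langle W\rangle+\langle Q_{M}\rangle)(1-\zeta^{c}\tanh^{2}(\beta\nu_{1}))$, and your positivity argument via $\zeta^{c}\in[0,1]$ and $\coth x\geq\tanh x$ is equivalent to the paper's observation that $0\leq\zeta^{c}\tanh^{2}(\beta\nu_{1})\leq1$.
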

This theorem shows that even when coherence is present, the fluctuations of heat $Q_{M}$ give an upper bound to the fluctuations of work. However, while in the case of the dephased engine, we have $\llangle W^{2}\rrangle_{c}/\llangle Q_{M}^{2}\rrangle_{c}<1$, in the case of the undephased engine we have $\mathrm{Re}[\langle W^{2}\rangle_{c}]/\mathrm{Re}[\langle Q_{M}^{2}\rangle_{c}]\leq1$ even in the heat engine region. The fluctuations become equal when efficiency goes to 1. To show this, from Appendix \ref{EERF} we have
\begin{equation}
\begin{split}
\mathrm{Re}[\langle Q_{M}^{2}\rangle_{c}]-\mathrm{Re}[\langle W^{2}\rangle_{c}] =2\nu_{1}\coth(\beta\nu_{1})(\langle W\rangle+\langle Q_{M}\rangle)(1-\zeta^{c}\tanh^{2}(\beta\nu_{1})).
\end{split}\label{gth}
\end{equation}
Thus, we see that when work and heat $Q_{M}$ go to zero, their fluctuations become equal. Please note that similarly to the depahsed engine, Eq. (\ref{gth}) can still be $\geq0$ even in the accelerator region. On the other hand, in the heater region we have $\mathrm{Re}[\langle Q_{M}^{2}\rangle_{c}]-\mathrm{Re}[\langle W^{2}\rangle_{c}]\leq0$, since both $\langle W\rangle$ and $\langle Q_{M}\rangle$ are $\leq0$.

Now let's go back to the difference between the RFs. When $V=U^{\dagger}$, in Appendix \ref{diffref} we compute the difference between the RFs of $W$ and $Q_{M}$ and between the RFs of $W$ and $Q_{C}$. Their expressions are
\begin{equation}
\begin{split}
\frac{\mathrm{Re}[\langle W^{2}\rangle_{c}]}{\langle W\rangle^{2}}-\frac{\mathrm{Re}[\langle Q_{M}^{2}\rangle_{c}]}{\langle Q_{M}\rangle^{2}}=\frac{8\nu_{1}\nu_{2}^{2}\tanh(\beta\nu_{1})(\langle Q_{M}\rangle+\langle W\rangle)(\theta\zeta^{c}-(\theta_{c}-\delta')^{2})}{\langle W\rangle^{2}\langle Q_{M}\rangle^{2}},
\end{split}\label{osam}
\end{equation}
and,
\begin{equation}
\frac{\mathrm{Re}[\langle W^{2}\rangle_{c}]}{\langle W\rangle^{2}}-\frac{\langle Q_{C}^{2}\rangle_{c}}{\langle Q_{C}\rangle^{2}}=\frac{(4\nu_{1}\nu_{2}\tanh(\beta\nu_{1}))^{2}(\theta\zeta^{c}-(\theta_{c}-\delta')^{2})}{\langle W\rangle^{2}\langle Q_{C}\rangle^{2}}.\label{osmr}
\end{equation}
Note that both differences are proportional to $(\theta\zeta^{c}-(\theta_{c}-\delta')^{2})$. Numerically, for the model of Sec. \ref{casestu}, we always find that it is $\geq0$. However, we could not prove it. This is because $\theta$, $\zeta^{c}$, $\theta_{c}$, and $\delta'$ are all linked to each other. Further, note that when $(\langle Q_{M}\rangle+\langle W\rangle)\geq0$, the differences in Eqs. (\ref{osam})-(\ref{osmr}) have the same sign. Furthermore, even though we could not prove that,
\begin{equation}
\frac{\mathrm{Re}[\langle W^{2}\rangle_{c}]}{\langle W\rangle^{2}}\geq\frac{\mathrm{Re}[\langle Q_{M}^{2}\rangle_{c}]}{\langle Q_{M}\rangle^{2}}\geq \frac{\langle Q_{C}^{2}\rangle_{c}}{\langle Q_{C}\rangle^{2}},
\end{equation}
we believe it holds for the undephased engine. However, while the inequalities in Eq. (\ref{mqcqhw1}) becomes equalities only in the adiabatic regime \cite{Abdelkader,Abdelkader2}. For the undephased engine, the RFs can be equal even when $\delta'\neq0$. To show this, let's consider the model of Sec. \ref{casestu}. We already pointed out that the highest values of work, efficiency, and work reliability are achieved when $\phi=\chi=0$. Under the later conditions we have,
\begin{equation}
\theta\zeta^{c}=(\theta_{c}-\delta')^{2}(\neq0).\label{pmol}
\end{equation}
Plugging this into Eqs. (\ref{osam})-(\ref{osmr}), we have,
\begin{equation}
\frac{\mathrm{Re}[\langle W^{2}\rangle_{c}]}{\langle W\rangle^{2}}=\frac{\mathrm{Re}[\langle Q_{M}^{2}\rangle_{c}]}{\langle Q_{M}\rangle^{2}}= \frac{\langle Q_{C}^{2}\rangle_{c}}{\langle Q_{C}\rangle^{2}}.
\end{equation}
From the first equality in this equation, we see that
\begin{equation}
\langle \eta\rangle^{2}=\left(\frac{\langle W\rangle}{\langle Q_{M}\rangle}\right)^{2}=\frac{\mathrm{Re}[\langle W^{2}\rangle_{c} ]}{\mathrm{Re}[\langle Q_{M}^{2}\rangle_{c}]}.
\end{equation}
Again, while the lower bound on the ratio of work and heat $Q_{M}$ fluctuations in Eq. (\ref{UppLo}) is only achieved in the adibatic regime, in the presence of coherence, this is not the case.

Finally we have,
\begin{equation}
\mathrm{Re}[\langle Q_{M}^{2}\rangle_{c}]-\langle Q_{M}\rangle^{2}=4\nu_{2}^{2}(\theta-2((\theta_{c}-\delta')\tanh(\beta\nu_{1}))^{2}).
\end{equation}
When $\phi=\chi=0$ this equation becomes,
\begin{equation}
\mathrm{Re}[ \langle Q_{M}^{2}\rangle_{c}]-\langle Q_{M}\rangle^{2}=4\theta\nu_{2}^{2}(1-2\zeta^{c}\tanh^{2}(\beta\nu_{1})).\label{uji}
\end{equation}
Where we use Eq. (\ref{pmol}). Further note that in Sec. \ref{casestu}, $V=U^{\dagger}$, when $\phi=0$. In Appendix \ref{lowerrr}, we proved that for arbitray $U$ and for $V=U^{\dagger}$ we have $0\leq\zeta^{c}\leq1/2$. In this case, we have $\mathrm{Re}[\langle Q_{M}^{2}\rangle_{c}]\geq\langle Q_{M}\rangle^{2}$. From this, we have
\begin{equation}
R_{WUD}\leq1.\label{rwm}
\end{equation}
This shows that work reliability is still bounded by 1. Furthermore, note that while the reliability of the dephased engine needs both $\delta'=0$ and $\beta=+\infty$ to reach 1, the reliability of the undephased engine needs only $\beta=+\infty$. That is, as we see from Eq. (\ref{uji}), when $\beta=+\infty$ and $\zeta^{c}=1/2$, even when $\delta'\neq0$, we have $R_{WUD}=1$. In the future, we try to prove Eq. (\ref{rwm}) for arbitrary $\phi$ and $\chi$. In Fig. \ref{therm2}, we give the plot of reliability and $2/\langle\Sigma\rangle-1$.

\begin{figure}[hbtp]
\centering
\includegraphics[scale=0.35]{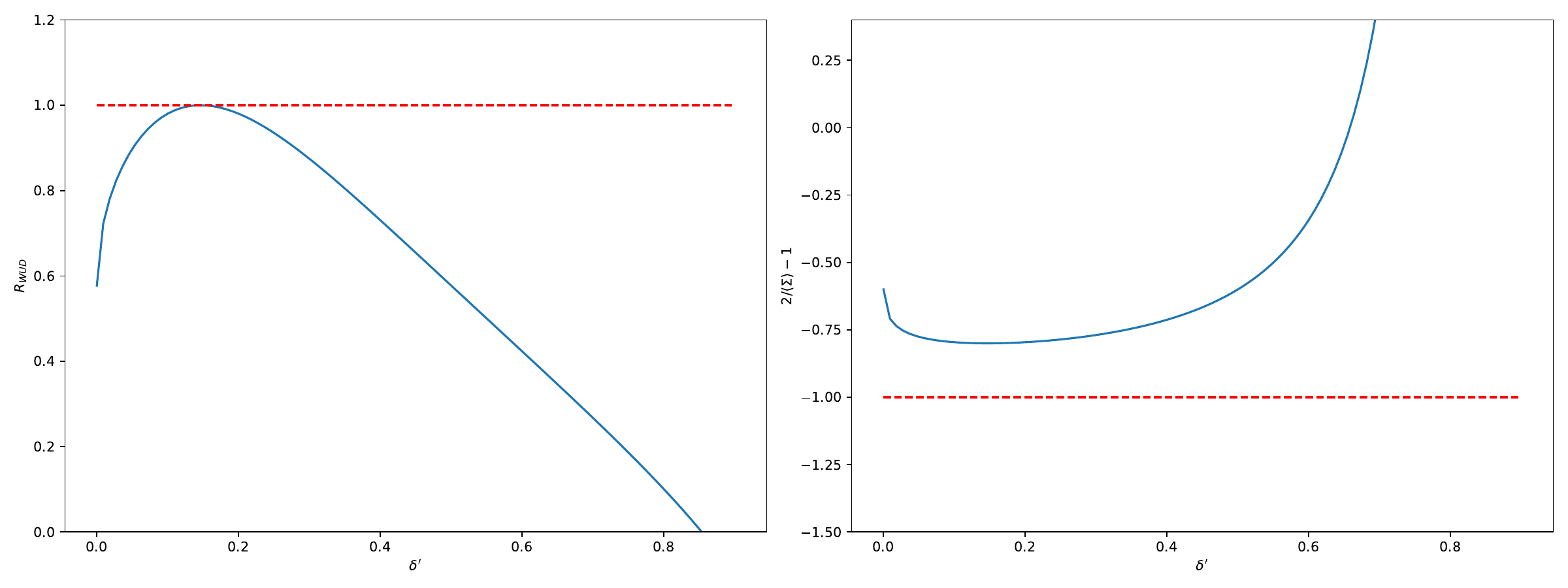}
\caption{Plot of the relibality $R_{WUD}$ and the bound $2/\langle\Sigma\rangle-1$ as a function of $\delta'$ for $\nu_{1}=1$, $\nu_{2}=2$, $\chi=\phi=0$, $\beta=10$, and $\alpha=3\pi/4$. We see from the plot of $R_{WUD}$ that the work reliability is $\leq1$, thus respecting our bound, in theorem (\ref{theorem1E}) and Eq. (\ref{rwm}). On the other hand, we see that $2/\langle\Sigma\rangle-1$ becomes negative even in the heat engine (see Fig. (\ref{therm})), showing that we can't use it to bound the reliability of work.}\label{therm2}
\end{figure}

\subsection{Main featues of the undephased engine }
Here, let's resume all the features of the undephased engine. We have,
\begin{enumerate}
\item $\langle Q_{M}\rangle$ can still be positive even when $\delta'\geq1/2$. Thus, in the presence of coherence, a heat engine or an accelerator can still be possible even when $\delta'\geq1/2$, which is not the case for the dephased engine. A heater also becomes possible when $\delta'\leq1/2$. Thanks to the second term of the last equation in Eqs. (\ref{QMA2}).

\item The average work does not need the condition $\nu_{2}>\nu_{1}$ for a positive work condition. Even when $\nu_{2}=\nu_{1}$, we still have a heat engine \cite{Abdelkader,Shanhe,JordanNM}. Further, we showed that the work extracted by the dephased heat engine is bounded by $\nu_{2}-\nu_{1}$. The latter becomes useless when $\nu_{2}=\nu_{1}$ since it predicts that a heat engine is not possible.

\item The efficiency is bounded by 1, not by that of the Otto.

\item Consider the qubit model of Sec. \ref{casestu}. When $\delta>0$, the best plane for a high amount of work, high reliability of work, and high efficiency is the \textit{xz}-plane. However, note that when $\delta=0$, both the dephased and undephased engines become identical in terms of the averages and, in general, the cumulants. 

\item The monotonic behavior of the cumulants of the dephased engine is no longer valid for the undephased engine. That is we can see work, efficiency and work reliability increases when we increase $\delta$.

\item Equality between the RFs does not need the adiabatic regime.
\end{enumerate}

\section{Conclusions}\label{conc}

In this work, we have extended our previous one, Refs. \cite{Abdelkader,Abdelkader2}, more thoroughly by considering also quantum coherence. We have shown how one can derive the cumulants of the undephased engine. We found that for coherence to be included, we should use \textit{Kirkwood-Dirac quasiprobability}. Then we explained in detail the influence of the parameters on average work, efficiency, and work reliability on the monitored engine. For this latter engine, we found that the highest values of the main quantities are achieved only in the adiabatic regime. For the undephased engine, we first showed how using Eqs. (\ref{deltap})-(\ref{tethaaaa})-(\ref{zetaa})-(\ref{teth})-(\ref{zet})-(\ref{zetac}), one can obtain all the averages and variances and compress them into simpler expressions; see Eqs. (\ref{qcun1})-(\ref{qmun1})-(\ref{wun1}) and (\ref{resu11})-(\ref{resu22})-(\ref{resu33}) for arbitrary qubit unitaries and unital channels. Then, considering the qubit model of Sec. \ref{casestu}, we have shown in which plane we should fuel the engine for the best average work, efficiency, and work reliability. Our study explains in detail which parameters should be increased and which should not for an enhancement of work, efficiency, and work reliability.

In addition to our analytical results, we showed that non-adiabatic transitions are not always detrimental to thermodynamic quantities; see Refs. \cite{Alecce,Feldmann,KosloffR}, where the negative role of non-adiabatic transitions was pointed out and explained. Our work shows that we can take advantage of them to increase average work, efficiency, and work reliability. This advantage would not be possible when the hot bath is completely thermalizing or when the working medium is monitored. Thanks to quantum measurement \textit{that fuels the engine} and \textit{having a positive influence on the engine's coherence created in the first unitary stroke $\mathbf{A}\rightarrow\mathbf{B}$} (cf. Fig. \ref{Otto}), non-adiabatic transitions become useful. Furthermore, we found that a heat engine becomes possible in the usually not-allowed regime, i.e., $\langle W\rangle>0$ even when $\delta\geq1/2$; see Fig. \ref{therm} and Fig. \ref{ggg}. Numerical plots also showed that an accelerator (a heater) becomes possible when $\delta\geq(\leq)1/2$.

We proved that the ratio of the fluctuations of work $W$ and heat $Q_{M}$ is still bounded by 1 in the heat engine region, even for the undephased engine. Further, we explained in detail the relationships between the RFs, i.e., Eqs. (\ref{mqcqhw1})-(\ref{UppLo}), for the undephased engine. We hope this work sheds more light on \textit{quantum unital Otto heat engines} \cite{Abdelkader,Abdelkader2}. Our study has the advantage that the majority of the results are proven analytically. And we believe that they can be pushed further. For example, one can look at these results for higher-dimensional working mediums such as coupled spins \cite{Abdelkader3}. One can also look at the implementation of this engine experimentally and test the validity of these results and the previous ones in Refs. \cite{Abdelkader,Abdelkader2}, when the system is subjected to external noises, i.e., when, e.g., the unitaries and the unital channel are not exact. Finally, one can relax the assumption that the cold bath is completely thermalizing.

We must emphasize that people use full-counting statistics to assess the fluctuations of quantum Otto heat engines \cite{THTH,THTH2}. Now, the question is: what is the difference between our work and the previous ones? Actually, after some calculations, we found that the second cumulants that follow from Eq.(\ref{chiFF}) contain more terms, which we believe would enhance our understanding of how to engineer reliable quantum thermal machines. The fact that our approach gives more terms is because Eq.(\ref{chiFF}) also includes the correlations between the stochastic quantities $E_1$, $E_2$, $E_3$, and $E_4$. To be more precise, the Kirkwood-Dirac quasi-probability takes into account the correlations between the random quantities that describe work during the adiabatic strokes. It would be important to compare our approach of using the Kirkwood-Dirac quasi-probability with full-counting statistics.



\appendix

\section{On why eliminating the dephasing channel from Eq. (\ref{chiF}) gives the cumulants of undephased engine}\label{hjkjkl}

Before we give our arguments to convince the reader about the usefulness of Eq. (\ref{chiFF}) we should recall that: \textit{suppose we have a quantum system described by a state $\rho$ and a Hamiltonian $H$. From quantum statistical mechanics — thanks to von Neumann — we know that the average energy of this system denoted by $\langle E\rangle$ is given by,
\begin{equation}
\langle E\rangle=\mathrm{Tr}[\rho H].\label{EEE}
\end{equation}
On the other hand, the variance of the energy of the system denoted by $var(E)$ is given by,
\begin{equation}
var(E)=\mathrm{Tr}[\rho H^{2}]-\mathrm{Tr}[\rho H]^{2}.\label{VVV}
\end{equation}}
Please, note that these two equations can be used for other operators and not only for the Hamiltonian operator.

Let's explain why the CF given by Eq. (\ref{chiFF}) is a good candidate to assess the cumulants of all the thermodynamic quantities of the undephased engine. We have already reported in section \ref{ude} the expression of the average energies for the unmonitored engine. If one is also interested in the fluctuations of these energies then according to the TPM scheme \cite{Campisi1, Esposito,Tasaki} (i.e. Eq (\ref{chiF})) we have,
\begin{equation}
\begin{split}
&
\llangle E_{1}^{2}\rrangle_{c}=\mathrm{Tr}[\rho_{1}H_{1}^{2}]-\mathrm{Tr}[\rho_{1}H_{1}]^{2},
\\ &
\llangle E_{2}^{2}\rrangle_{c}=\mathrm{Tr}[\rho_{2}H_{2}^{2}]-\mathrm{Tr}[\rho_{2}H_{2}]^{2},
\\ &
\llangle E_{3}^{2}\rrangle_{c}=\mathrm{Tr}[\Phi(\Delta_{2}(\rho_{2}))H_{2}^{2}]-\mathrm{Tr}[\Phi(\Delta_{2}(\rho_{2})) H_{2}]^{2},
\\ &
\llangle E_{4}^{2}\rrangle_{c}=\mathrm{Tr}[V\Delta_{2}(\Phi(\Delta_{2}(\rho_{2})))V^{\dagger}H_{1}^{2}]-\mathrm{Tr}[V\Delta_{2}(\Phi(\Delta_{2}(\rho_{2})))V^{\dagger}H_{1}]^{2}.
\end{split}\label{iopp}
\end{equation} 
On the other hand from Eq. (\ref{chiFF}) we have,
\begin{equation}
\begin{split}
&
\langle E_{1}^{2}\rangle_{c}=\mathrm{Tr}[\rho_{1}H_{1}^{2}]-\mathrm{Tr}[\rho_{1}H_{1}]^{2}=\llangle E_{1}^{2}\rrangle_{c},
\\ &
\langle E_{2}^{2}\rangle_{c}=\mathrm{Tr}[\rho_{2}H_{2}^{2}]-\mathrm{Tr}[\rho_{2}H_{2}]^{2}=\llangle E_{2}^{2}\rrangle_{c},
\\ &
\langle E_{3}^{2}\rangle_{c}=\mathrm{Tr}[\rho_{3}H_{2}^{2}]-\mathrm{Tr}[\rho_{3} H_{2}]^{2}\neq \llangle E_{3}^{2}\rrangle_{c},
\\ &
\langle E_{4}^{2}\rangle_{c}=\mathrm{Tr}[\rho_{4} H_{1}^{2}]-\mathrm{Tr}[\rho_{4} H_{1}]^{2}\neq\llangle E_{4}^{2}\rrangle_{c}.\label{thyu}
\end{split}
\end{equation} 
From Eqs. (\ref{EEE})-(\ref{VVV}) we see that the fluctuations of the energies of the engine should be computed with respect to the states and not with respect to the evolved dephased state, something that is satisfied for all variances of the energies that follow from Eq. \ref{chiFF} — contrary to Eq. (\ref{chiF}) where only the fluctuations $E_{1}$ and $E_{2}$ are computed with respect to the states. This shows that from Eq. (\ref{chiFF}) not only one can get the right average energies but also their true fluctuations.

Consider now the case when we are only interested in the fluctuations of $Q_{C}$}. In this case, according to the TPM scheme \cite{Campisi1, Esposito,Tasaki} we should measure the energy  of the engine at the point $\mathbf{A}$ (see Fig. \ref{Otto}) and then evolve it by $V(\Phi
(U(.)U^{\dagger})V^{\dagger}$ and finally we measure the system's energy at the point $\mathbf{D}$ and then construct the probability distribution of $Q_{C}$ — (since $\rho_{1}$ commutes with $H_{1}$ the TPM is sufficient to construct the distribution of $Q_{C})$. In this scenario the variance of $Q_{C}$ is given by,
\begin{equation}
\begin{split}
var(Q_{C}) & =\mathrm{Tr}\left[H_{1}^{2}V(\Phi(U\rho_{1}U^{\dagger}))V^{\dagger}\right]+\mathrm{Tr}\left[H_{1}^{2}\rho_{1}\right]-2\mathrm{Tr}\left[H_{1}V(\Phi(UH_{1}\rho_{1}U^{\dagger}))V^{\dagger}\right]-\langle Q_{C}\rangle^{2}.\label{thyui}
\end{split}
\end{equation}
On the other hand from Eq. (\ref{chiF}) the fluctuations of $Q_{C}$ are given by,
\begin{equation}
\begin{split}
\llangle Q_{C}^{2}\rrangle_{c}  =\mathrm{Tr}\left[H_{1}^{2}V\Delta_{2}(\Phi(\Delta_{2}(U\rho_{1}U^{\dagger})))V^{\dagger}\right]+\mathrm{Tr}\left[H_{1}^{2}\rho_{1}\right]-2\mathrm{Tr}\left[H_{1}V\Delta_{2}(\Phi(\Delta_{2}(UH_{1}\rho_{1}U^{\dagger})))V^{\dagger}\right]-\llangle Q_{C}\rrangle^{2}.
\label{bMp}
\end{split}
\end{equation}
We see that the fluctuations of $Q_{C}$ derived Eq. (\ref{chiFF}) (i.e. the first equation in Eq. (\ref{b6})) agree with those that follow from measuring the system at the start and the end of the cycle i.e. Eq. (\ref{thyui}). On the other hand, we see that the fluctuations of $Q_{C}$ derived from Eq. (\ref{chiF}) do not agree with Eq. (\ref{thyui}) since the variance is computed with respect to the evolved dephased states.

Furthermore, let's consider the fluctuations of the heat released to the cold bath, i.e., $Q_{C}'$. Without going into further details, in Ref. \cite{Abdelkader2}, we proved that the variance, or more generally the $l$th cumulant, of $Q_{C}'$ is the sum of the $l$th cumulant of $E_{1}$ and that of $E_{4}$. Thus, the variance of $Q_{C}'$, denoted by $var(Q_{C}')$, is given by
\begin{equation}
var(Q_{C}')=\langle E_{1}^{2}\rangle_{c}+\langle E_{4}^{2}\rangle_{c}
\end{equation}
If we follow the TPM scheme \cite{Campisi1, Esposito, Tasaki}, then the variance would be $\llangle E_{1}^{2}\rrangle_{c}+\llangle E_{4}^{2}\rrangle_{c}$, and thus coherence would not contribute.

In the field of quantum thermodynamics, one of the important questions is the distinction between work and heat and the assessment of their statistics, since fluctuations are important in the quantum regime. For assessing work in the quantum regime people use the TPM scheme \cite{Campisi1, Esposito,Tasaki}. The latter means that we measure the energy of the system before and after the evolution and then we define work to be the difference between the outcomes. This method works when the initial state of the system is incoherent in the energy eigenbasis. Actually, this scheme has two advantages: (i) the average work derived from TPM agrees with the change in energy for incoherent states and (ii) the validity of the fluctuations theorems. The problem starts to be interesting when the initial state of the system has non-zero coherence. In this case, the first measurement kills all the coherence and (possibly quantum correlations when the system is entangled with e.g. an environment). In Ref. \cite{Baumer} a no-go theorem shows that to study the statistics of work in the presence of coherence one must relax the assumption of positive probabilities. This is the reason that Kirkwood-Dirac quasiprobability started gaining more interest in the field, see Ref. \cite{,MatteoLevy,NicoleY,Chiara,Francica2,Francica1,HTQuan,Santini,Belenchia} for further details. This quasiprobability can be measured experimentally \cite{NicoleY,Chiara} by using e.g. an ancillary system in which the statistics of work would be encoded. The work \cite{HTQuan} has studied the real part of the  Kirkwood-Dirac quasiprobability and shows that it satisfies properties that other proposed schemes in literature do not satisfy. Finally, in Ref. \cite{Francica1} the author has studied the statistics of work in the quantum Ising model and showed that the second cumulants of work have a real and imaginary part.

Days after our paper, the authors of \cite{GomeZ} have shown how to experimentally reconstruct the real and imaginary parts of the Kirkwood-Dirac quasiprobability of work via an interferometric scheme adapted to the
nitrogen-vacancy (NV) center in diamond. The authors have also shown that the second cumulant of work has a real and imaginary part where the latter is linked to non-commutativity. It was found that coherence can not only be helpful in the extraction of work but also in reducing its fluctuations.

\begin{result}
In the main text and this section, we showed that Eq. (\ref{chiFF}) gives the correct average energies (and thus average work and heats), their fluctuations, and the statistics of $Q_{C}$ and $Q_{C}'$. On the other hand Eq. (\ref{chiF}) fails for this task — the reason is quantum coherence in the energy eigenbasis. Therefore, we believe that Eq. (\ref{chiFF}) could also be used to assess the fluctuations of the stochastic work and the stochastic heat absorbed. The fact that the cumulants are computed with respect to the true state can be beneficial to understanding the role of coherence on the engine performance i.e. in terms of enhancing work, efficiency and work reliability as we show in the main text.
\end{result}

\section{Entropy production}\label{PROD}

In Ref. \cite{Abdelkader2}, we showed that the average entropy production of the dephased engine can be written as follows:
\begin{equation}
\llangle \Sigma\rrangle=S(\rho_{4}^{deph})-S(\rho_{1})+S(\rho_{4}^{deph}||\rho_{1}).\label{Entropy}
\end{equation}
Here $\rho_{4}^{deph}:=V \Delta_{2}(\Phi(\Delta_{2}(U\rho_{1}U^{\dagger})))V^{\dagger}$, $S(\rho):=-\mathrm{Tr}\left[\rho\log \rho\right]$ is the von Neumann entropy, and $S(\rho||\sigma):=\mathrm{Tr}\left[\rho(\log \rho-\log \sigma)\right]$ is the Kullback-Leibler divergence. One can show similarly that the $\langle \Sigma\rangle$ satisfies the next set of equalities,
\begin{equation}
\begin{split}
\langle\Sigma\rangle & 
= S(\rho_{4})-S(\rho_{1})+S(\rho_{4}||\rho_{1})
\\&
= S(\rho_{3})-S(\rho_{2})+S(\rho_{4}||\rho_{1}).
\end{split}
\label{Entropy2}
\end{equation}
The states $\rho_{1}$, $\rho_{2}$, $\rho_{3}$, and $\rho_{4}$ are defined in section \ref{ude}. From the first to the second line in Eq. (\ref{Entropy2}), we use the fact that $U$ and $V$ are unitaries; hence, they do not change the entropy. Equation (\ref{Entropy2}) contains two sources of irreversibility: (i) the irreversibility caused by the unital channel $\Phi$ \cite{Andrea,Landii,Zanardi}, i.e., $S(\rho_{3})-S(\rho_{2})$, and (ii) the irreversibility due to the equilibration of the working medium as quantified by $S(\rho_{4}||\rho_{1})$, \cite{DeffnerT,DeffnerT2}.

\section{Arbitrary qubit Hamiltonian $H_{1(2)}$}\label{zedine}
Let's now explain what we mean by arbitrary $H_{1}$ and $H_{2}$. It's known that arbitray qubit Hamiltonians denoted $H_{qubit}$ can be written as follows \cite{Nielsen};
\begin{equation}
H_{qubit}:=\frac{1}{2}\left(\Gamma \mathbb{1}_{2\times 2}+\Sigma \sigma_{x}+\Delta\sigma_{y}+\omega \sigma_{z}\right).\label{qubitH}
\end{equation}
Here $\Gamma$ is a constant that causes a shift in the energy levels of the system, $\omega$ is the energy difference between the ground and the excited states of the qubit, and $\Sigma$ and $\Delta$ characterize the perturbations in the $x-$ and $y-$directions, e.g. a magnetic field in the $x-$ and $y-$directions. Furthermore, $\Gamma$, $\omega$, $\Sigma$ and $\Delta$ are real coefficients. If we diagonalize $H_{qubit}$ we obtain:
\begin{equation}
H_{qubit}=\nu|+\rangle\langle+|+\mu |-\rangle\langle-|,
\end{equation}
where $\nu=(\Sigma+\sqrt{\Gamma^{2}+\Delta^{2}+\omega^{2}})/2$, $\mu=(\Sigma-\sqrt{\Gamma^{2}+\Delta^{2}+\omega^{2}})/2$, and $|+\rangle$ and $|-\rangle$ are the eigenvectors of $H_{qubit}$. Note that $\Gamma$ has the same effect on the eigenvalues, i.e. it only causes a shift in them and thus can be eliminated. Thus we obtain,
\begin{equation}
H_{qubit}=\frac{1}{2}\left(\Sigma \sigma_{x}+\Delta\sigma_{y}+\omega \sigma_{z}\right).\label{qubitHH}
\end{equation}
Now let's now write $H_{1}$ and $H_{2}$ in this form. We have,
\begin{equation}
H_{1(2)}=\frac{1}{2}\left(\Sigma_{1(2)} \sigma_{x}+\Delta_{1(2)}\sigma_{y}+\omega_{1(2)} \sigma_{z}\right).\label{qubit12}
\end{equation} 
When we say that $H_{1}$ and $H_{2}$ are arbitrary we mean that the eigenvalues and their eigenvectors are arbitrary, i.e. $\Sigma$, $\Delta$ and $\omega$ can take any real values thus the Hamiltonian $H_{1}$ and $H_{2}$ are arbitrary since they are a linear combination of $\sigma_{x}$, $\sigma_{y}$ and $\sigma_{z}$.

\section{Expression of the average energies for the undephased engine in terms of the transition probabilities: arbitrary Hamiltonians, unitaries, and unital channels}\label{app1r}
The Hamiltonian $H_{1}$, Eq. (\ref{H11}), can also be written as
\begin{equation}
H_{1}=2\nu_{1}|+\rangle_{11}\langle+|-\nu_{1}\mathbb{1}_{2}\label{H1}.
\end{equation}
Since at the beginning of the cycle, the system is assumed to start in thermal equilibrium, we have
\begin{equation}
\rho_{1}:=p_{g}|-\rangle_{11}\langle-|+p_{e}|+\rangle_{11}\langle+|=-\tanh(\beta\nu_{1})|+\rangle_{11}\langle+|+p_{g}
   \mathbb{1}_{2} \label{rhoo},
\end{equation} 
where $p_{e}=e^{-\beta \nu_{1}}/Z (p_{g}=e^{\beta \nu_{1}}/Z)$ is the excited (ground) state occupation. The second equality in Eq. (\ref{rhoo}) can be proven easily after simple algebra. Note that the partition function $Z$ is given by $Z:=\cosh(\beta \nu_{1})$. Similarly to Eq. (\ref{H1}), $H_{2}$ is given by
\begin{equation}
H_{2}=2\nu_{2}|+\rangle_{22}\langle+|-\nu_{2}\mathbb{1}_{2}\label{H2}.
\end{equation}

Let's now compute the average energies. We start with the average energy  $\langle E_{1}\rangle$. We have,
\begin{equation}
\begin{split}
\langle E_{1}\rangle & =\mathrm{Tr}\left[\rho_{1}H_{1}\right]
\\ &
=\mathrm{Tr}\left[(-\tanh(\beta\nu_{1})|+\rangle_{11}\langle+|+p_{g}
   \mathbb{1}_{2} )(2\nu_{1}|+\rangle_{11}\langle+|-\nu_{1}\mathbb{1}_{2})\right]
   \\ &
=-2\nu_{1}\tanh(\beta\nu_{1})\mathrm{Tr}\left[|+\rangle_{11}\langle+|\right]+\nu_{1}\tanh(\beta\nu_{1})\mathrm{Tr}\left[|+\rangle_{11}\langle+|\right]
+2\nu_{1}p_{g}\mathrm{Tr}\left[|+\rangle_{11}\langle+|\right]
-\nu_{1}p_{g}\mathrm{Tr}\left[\mathbb{1}_{2}\right]
   \\ &
=-2\nu_{1}\tanh(\beta\nu_{1})+\nu_{1}\tanh(\beta\nu_{1})+2\nu_{1}p_{g}-2\nu_{1}p_{g}
\\ &
=-\nu_{1}\tanh(\beta\nu_{1}).
\end{split}\label{e1}
\end{equation}
In the second line, we replace $H_{1}$ and $\rho_{1}$ by their expressions, i.e., Eq. (\ref{H1}) and Eq. (\ref{rhoo}). In the fourth line, we use the fact that $\mathrm{Tr}\left[|+\rangle_{11}\langle+|\right]=1$ and $\mathrm{Tr}\left[\mathbb{1}_{2}\right]=2$. 

For the average energy $\langle E_{2}\rangle$ we have,
\begin{equation}
\begin{split}
\langle E_{2}\rangle & =\mathrm{Tr}\left[\rho_{2}H_{2}\right]
\\ &
=\mathrm{Tr}\left[(-\tanh(\beta\nu_{1})U|+\rangle_{11}\langle+|U^{\dagger}+p_{g}
   \mathbb{1}_{2} )(2\nu_{2}|+\rangle_{22}\langle+|-\nu_{2}\mathbb{1}_{2})\right]
   \\ &
=-2\nu_{2}\tanh(\beta\nu_{1})\mathrm{Tr}\left[U|+\rangle_{11}\langle+|U^{\dagger}|+\rangle_{22}\langle+|\right]+\nu_{2}\tanh(\beta\nu_{1})\mathrm{Tr}\left[U|+\rangle_{11}\langle+|U^{\dagger}\right]
+2\nu_{2}p_{g}\mathrm{Tr}\left[|+\rangle_{22}\langle+|\right]
-\nu_{2}p_{g}\mathrm{Tr}\left[\mathbb{1}_{2}\right]
   \\ &
=-2\nu_{2}\tanh(\beta\nu_{1})(1-\delta')+\nu_{2}\tanh(\beta\nu_{1})+2\nu_{2}p_{g}-2\nu_{2}p_{g}
\\ &
=-\nu_{2}(1-2\delta')\tanh(\beta\nu_{1}).
\end{split}\label{e2}
\end{equation}
In the fourth line, we use Eq. (\ref{deltap}) and the fact that $\delta'=|_{2}\langle +|U|-\rangle_{1}|^{2}=|_{2}\langle -|U|+\rangle_{1}|^{2}$ and $\delta'+|_{2}\langle +|U|+\rangle_{1}|^{2}=1$. The latter equality is nothing but probability conservation. 

The third average enegy is given by,
\begin{equation}
\begin{split}
\langle E_{3}\rangle & =\mathrm{Tr}\left[\rho_{3}H_{2}\right]
\\ &
=\mathrm{Tr}\left[(-\tanh(\beta\nu_{1})\Phi(U|+\rangle_{11}\langle+|U^{\dagger})+p_{g}
   \mathbb{1}_{2} )(2\nu_{2}|+\rangle_{22}\langle+|-\nu_{2}\mathbb{1}_{2})\right]
   \\ &
=-2\nu_{2}\tanh(\beta\nu_{1})\mathrm{Tr}\left[\Phi(U|+\rangle_{11}\langle+|U^{\dagger})|+\rangle_{22}\langle+|\right]+\nu_{2}\tanh(\beta\nu_{1})\mathrm{Tr}\left[\Phi(U|+\rangle_{11}\langle+|U^{\dagger})\right]
+2\nu_{2}p_{g}\mathrm{Tr}\left[|+\rangle_{22}\langle+|\right]
-\nu_{2}p_{g}\mathrm{Tr}\left[\mathbb{1}_{2}\right]
   \\ &
=-2\nu_{2}\tanh(\beta\nu_{1})(1-\theta_{c})+\nu_{2}\tanh(\beta\nu_{1})+2\nu_{2}p_{g}-2\nu_{2}p_{g}
\\ &
=-\nu_{2}(1-2\theta_{c})\tanh(\beta\nu_{1}).
\end{split}\label{e3}
\end{equation}
In the fourth line, we use Eq. (\ref{teth}) and the fact that $\theta_{c}+{}_{2}\langle +|\Phi(U|+\rangle_{11}\langle +|U^{\dagger})|+\rangle_{2}={}_{2}\langle -|\Phi(U|+\rangle_{11}\langle +|U^{\dagger})|-\rangle_{2}+{}_{2}\langle +|\Phi(U|+\rangle_{11}\langle +|U^{\dagger})|+\rangle_{2}=1$. Finally, we have
\begin{equation}
\begin{split}
\langle E_{4}\rangle & =\mathrm{Tr}\left[\rho_{4}H_{1}\right]
\\ &
=\mathrm{Tr}\left[(-\tanh(\beta\nu_{1})V(\Phi(U|+\rangle_{11}\langle+|U^{\dagger}))V^{\dagger}+p_{g}
   \mathbb{1}_2 )(2\nu_{1}|+\rangle_{11}\langle+|-\nu_{1}\mathbb{1}_2)\right]
   \\ &
=-2\nu_{1}\tanh(\beta\nu_{1})\mathrm{Tr}\left[V(\Phi(U|+\rangle_{11}\langle+|U^{\dagger}))V^{\dagger}|+\rangle_{22}\langle+|\right]+\nu_{1}\tanh(\beta\nu_{1})\mathrm{Tr}\left[V(\Phi(U|+\rangle_{11}\langle+|U^{\dagger}))V^{\dagger}\right]
+2\nu_{1}p_{g}\mathrm{Tr}\left[|+\rangle_{22}\langle+|\right]
\\ &
-\nu_{1}p_{g}\mathrm{Tr}\left[\mathbb{1}_2\right]
   \\ &
=-2\nu_{1}\tanh(\beta\nu_{1})(1-\zeta^{c})+\nu_{1}\tanh(\beta\nu_{1})+2\nu_{1}p_{g}-2\nu_{1}p_{g}
\\ &
=-\nu_{1}(1-2\zeta^{c})\tanh(\beta\nu_{1}).
\end{split}\label{e4}
\end{equation}
In the fourth line, we use Eq. (\ref{zet}) and the fact that $\zeta^{c}+{}_{1}\langle +|V(\Phi(U|+\rangle_{11}\langle +|U^{\dagger}))V^{\dagger}|+\rangle_{1}={}_{1}\langle -|V(\Phi(U|+\rangle_{11}\langle +|U^{\dagger}))V^{\dagger}|-\rangle_{1}+{}_{1}\langle +|V(\Phi(U|+\rangle_{11}\langle +|U^{\dagger}))V^{\dagger}|+\rangle_{1}=1$. From Eqs. (\ref{e1})-(\ref{e2})-(\ref{e3})-(\ref{e4}), follow the averages of work and heat, which are given by,
\begin{equation}
\langle Q_{C}\rangle=\langle E_{1}\rangle-\langle E_{4}\rangle=-2\zeta^{c}\nu_{1}\tanh(\beta\nu_{1}),\label{qcun}
\end{equation}
\begin{equation}
\langle Q_{M}\rangle=\langle E_{3}\rangle-\langle E_{2}\rangle=2(\theta_{c}-\delta')\nu_{2}\tanh(\beta\nu_{1}),\label{qmun}
\end{equation}
and,
\begin{equation}
\langle W\rangle=\langle Q_{M}\rangle+\langle Q_{C}\rangle=2((\theta_{c}-\delta')\nu_{2}-\zeta^{c}\nu_{1})\tanh(\beta\nu_{1}).\label{wun}
\end{equation}
Please note that $\langle Q_{C}\rangle\leq0$ as expected, since $\zeta^{c}\geq0$.
\section{Expression of the second cumulants for the undepahed engine in terms of the transition probabilities: arbitrary Hamiltonians, unitaries, and unital channels}\label{app2r}
Let's now start with the variance of $Q_{C}$. From definition \ref{defin1} and Eq. (\ref{chiFF}), it follows that
\begin{equation}
\begin{split}
\langle Q_{C}^{2}\rangle_{c}:&=\frac{\partial^{2} \mathrm{log} (\chi_{UDE}(\gamma_{1}=\gamma_{C},\gamma_{2}=0,\gamma_{3}=0,\gamma_{4}=-\gamma_{C}))}{\partial (i\gamma_{C})^{2}}\bigg\rvert_{\gamma_{C}=0}
.
\end{split}
\end{equation}
Let's now make $a:=\chi_{UDE}(\gamma_{1}=\gamma_{C},\gamma_{2}=0,\gamma_{3}=0,\gamma_{4}=-\gamma_{C})$. One can arrive at:
\begin{equation}
\begin{split}
\langle Q_{C}^{2}\rangle_{c} &=\left(\frac{1}{a}\frac{\partial^{2}  a}{\partial (i\gamma_{C})^{2}}\right)\bigg\rvert_{\gamma_{C}=0}-\left(\frac{1}{a}\frac{\partial  a}{\partial (i\gamma_{C})}\right)^{2}\bigg\rvert_{\gamma_{C}=0}
\\ &
=\left(\frac{1}{a}\frac{\partial^{2}  a}{\partial (i\gamma_{C})^{2}}\right)\bigg\rvert_{\gamma_{C}=0}-\langle Q_{C}\rangle^{2}.
\end{split}
\end{equation}
Using the fact that when $\gamma_{C}=0$ we obtain $a=1$ then the variance of $Q_{C}$ is given by,
\begin{equation}
\begin{split}
\langle Q_{C}^{2}\rangle_{c} & =\mathrm{Tr}\left[H_{1}^{2}V(\Phi(U\rho_{1}U^{\dagger}))V^{\dagger}\right]+\mathrm{Tr}\left[H_{1}^{2}\rho_{1}\right]-2\mathrm{Tr}\left[H_{1}V(\Phi(UH_{1}\rho_{1}U^{\dagger}))V^{\dagger}\right]-\langle Q_{C}\rangle^{2}
\\ &
=2\nu_{1}^{2}-2\mathrm{Tr}\left[H_{1}V(\Phi(UH_{1}\rho_{1}U^{\dagger}))V^{\dagger}\right]-\langle Q_{C}\rangle^{2}.\label{b6}
\end{split}
\end{equation}
Note that in the second line, we use the fact that $H_{1}^{2}=\nu_{1}^{2}\mathbb{1_{2}}$. Furthermore, since $\rho_{1}H_{1}=H_{1}\rho_{1}$, one can show that $\mathrm{Im} [\langle Q_{C}^{2}\rangle_{c}] = 0$, hence using $\langle Q_{C}^{2}\rangle_{c}$ for the variance of $Q_{C}$ instead of $\mathrm{Re}[\langle Q_{C}^{2}\rangle_{c}]$.

 Now let's compute $\mathrm{Tr}\left[H_{1}V(\Phi(UH_{1}\rho_{1}U^{\dagger}))V^{\dagger}\right]$. We have,
\begin{equation}
\begin{split}
\mathrm{Tr}\left[H_{1}V(\Phi(UH_{1}\rho_{1}U^{\dagger}))V^{\dagger}\right] & =\mathrm{Tr}\left[(\nu_{1}|+\rangle_{11}\langle+|-\nu_{1}|-\rangle_{11}\langle-|)V(\Phi(U(\nu_{1}p_{e}|+\rangle_{11}\langle+|-\nu_{1}p_{g}|-\rangle_{11}\langle-|)U^{\dagger}))V^{\dagger}\right]
\\ &
=\nu_{1}^{2}p_{e}(1-\zeta^{c})-\nu_{1}^{2}p_{g}\zeta^{c}-\nu_{1}^{2}p_{e}\zeta^{c}+\nu_{1}^{2}p_{g}(1-\zeta^{c})
\\ &
=\nu_{1}^{2}(1-2\zeta^{c}).
\end{split}\label{b7}
\end{equation}
In the first line, we replace $H_{1}$ and $\rho_{1}$ by their expressions, and in the second line, we use Eq. (\ref{zet}). Putting all things together, i.e., Eqs. (\ref{b6})-(\ref{b7}), one arrives at,
\begin{equation}
\langle Q_{C}^{2}\rangle_{c}=4\zeta^{c}\nu_{1}^{2}-\langle Q_{C}\rangle^{2}.\label{resu1}
\end{equation}
Using Eq. (\ref{qcun}), we have, $\langle Q_{C}^{2}\rangle_{c}=4\zeta^{c}\nu_{1}^{2}(1-\zeta^{c}\tanh^{2}(\beta\nu_{1}))$. Furthermore, from the fact that $0\leq\zeta^{c}\leq1$, one can see that $\langle Q_{C}^{2}\rangle_{c}\geq0$, in agreement with the fact that this is a variance.

Now let's compute the fluctuations of $Q_{M}$. The latter are given as follows:
\begin{equation}
\mathrm{Re}[\langle Q_{M}^{2}\rangle_{c}]:=\mathrm{Re}\bigg\{\frac{\partial^{2} \mathrm{log} (\chi_{UDE}(\gamma_{1}=0,\gamma_{2}=-\gamma_{M},\gamma_{3}=\gamma_{M},\gamma_{4}=0))}{\partial (i\gamma_{M})^{2}}\bigg\rvert_{\gamma_{M}=0}
\bigg\}.
\end{equation}
$\mathrm{Re}$ here refers to the real part of the second cumulant of $Q_{M}$. That is, in contrast to the second cumulant of $Q_{C}$, the second cumulant of $Q_{M}$ has a real and imaginary part. After simple algebra (as we did for the variance of $Q_{C}$), one can arrive at,
\begin{equation}
\begin{split}
\mathrm{Re}[ \langle Q_{M}^{2}\rangle_{c}] & =\mathrm{Tr}\left[H_{2}^{2}\Phi(\rho_{2})\right]+\mathrm{Tr}\left[H_{2}^{2}\rho_{2}\right]-2\mathrm{Re}\{\mathrm{Tr}\left[H_{2}\Phi(H_{2}\rho_{2})\right]\}-\langle Q_{M}\rangle^{2}
\\ &
=2\nu_{2}^{2}-2\mathrm{Re}\{\mathrm{Tr}\left[H_{2}\Phi(H_{2}\rho_{2})\right]\}-\langle Q_{M}\rangle^{2}
\\ &
=2\nu_{2}^{2}-\mathrm{Tr}\left[H_{2}\Phi(H_{2}\rho_{2}+\rho_{2}H_{2})\right]-\langle Q_{M}\rangle^{2}.
\end{split}
\end{equation}
In the second line, we use the fact that $H_{2}^{2}=\nu_{2}^{2}\mathbb{1}_{2}$. In the last line, we use the fact that $2\mathrm{Re}\{\mathrm{Tr}\left[H_{2}\Phi(H_{2}\rho_{2})\right]\}=2(\mathrm{Tr}\left[H_{2}\Phi(H_{2}\rho_{2})\right]+\mathrm{Tr}\left[H_{2}\Phi(H_{2}\rho_{2})\right]^{\ast})/2=\mathrm{Tr}\left[H_{2}\Phi(H_{2}\rho_{2}+\rho_{2}H_{2})\right]$, where $\mathrm{Tr}\left[H_{2}\Phi(H_{2}\rho_{2})\right]^{\ast}(:=\mathrm{Tr}\left[\left(H_{2}\Phi(H_{2}\rho_{2})\right)^{\dagger}\right])$ is the complex conjugate of $\mathrm{Tr}\left[H_{2}\Phi(H_{2}\rho_{2})\right]$. Now let's simplify the latter term using an important trick. One can prove the next result:
\begin{lemma}
The anticommutator $\{H_{2},\rho_{2}\}:=H_{2}\rho_{2}+\rho_{2}H_{2}$ can be simplified as follows:
\begin{equation}
\{H_{2},\rho_{2}\}=H_{2}\rho_{2}+\rho_{2}H_{2}=2\nu_{2}({}_{2}\langle +|\rho_{2}|+\rangle_{2} |+\rangle_{22}\langle+|-\langle -|\rho_{2}|-\rangle |-\rangle_{22}\langle-|)=2H_{2}\Delta_{2}(\rho_{2}).
\end{equation}\label{lemma}
\end{lemma}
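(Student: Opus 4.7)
The plan is to exploit the decomposition $\rho_{2}=\Delta_{2}(\rho_{2})+\mathrm{Off}_{2}(\rho_{2})$ already introduced in Eq. (\ref{decomposi}) and split the anticommutator $\{H_{2},\rho_{2}\}$ accordingly into a diagonal contribution and an off-diagonal contribution. The key observation is that $H_{2}$ is diagonal in the basis used to define $\Delta_{2}$, so $H_{2}$ commutes with $\Delta_{2}(\rho_{2})$ while it \emph{anticommutes} with each off-diagonal rank-one piece $|\pm\rangle_{22}\langle\mp|$ because those basis outer products connect eigenstates with opposite eigenvalues $\pm\nu_{2}$.

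More concretely, first I would write
\begin{equation}
\{H_{2},\rho_{2}\}=\{H_{2},\Delta_{2}(\rho_{2})\}+\{H_{2},\mathrm{Off}_{2}(\rho_{2})\}.
\end{equation}
For the diagonal piece, since $[H_{2},\Delta_{2}(\rho_{2})]=0$, we have $\{H_{2},\Delta_{2}(\rho_{2})\}=2H_{2}\Delta_{2}(\rho_{2})$, which already yields the rightmost expression claimed in the lemma once one inserts the explicit expression $H_{2}=\nu_{2}(|+\rangle_{22}\langle+|-|-\rangle_{22}\langle-|)$ and $\Delta_{2}(\rho_{2})={}_{2}\langle+|\rho_{2}|+\rangle_{2}\,|+\rangle_{22}\langle+|+{}_{2}\langle-|\rho_{2}|-\rangle_{2}\,|-\rangle_{22}\langle-|$. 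For the off-diagonal piece, with $c:={}_{2}\langle+|\rho_{2}|-\rangle_{2}$, one has
\begin{equation}
H_{2}\bigl(c|+\rangle_{22}\langle-|\bigr)=\nu_{2}\,c|+\rangle_{22}\langle-|,\qquad \bigl(c|+\rangle_{22}\langle-|\bigr)H_{2}=-\nu_{2}\,c|+\rangle_{22}\langle-|,
\end{equation}
and analogously for the Hermitian-conjugate term, so the two off-diagonal contributions cancel and $\{H_{2},\mathrm{Off}_{2}(\rho_{2})\}=0$.

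Combining these two facts gives the middle expression in the lemma, and the identification with $2H_{2}\Delta_{2}(\rho_{2})$ follows directly by factoring $\nu_{2}$ and matching the sign structure of $H_{2}$. Honestly, there is no serious obstacle here: the only thing to be careful with is not to mistakenly commute $H_{2}$ through the \emph{off-diagonal} part of $\rho_{2}$, which is the common pitfall when one tries to replace $\{H_{2},\rho_{2}\}$ by $2H_{2}\rho_{2}$ globally. The cleanest presentation is therefore the two-line commute/anticommute split above, which makes manifest why only the dephased component $\Delta_{2}(\rho_{2})$ survives in the anticommutator. This identity will subsequently allow $\mathrm{Tr}[H_{2}\Phi(\{H_{2},\rho_{2}\})]$ in the computation of $\langle Q_{M}^{2}\rangle_{c}$ to be reduced to a trace involving only $\Phi(\Delta_{2}(\rho_{2}))$, which is precisely the simplification needed to recover the compact form of Eq. (\ref{resu22}).
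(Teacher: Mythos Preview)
Your proof is correct and is essentially the same argument as the paper's: both expand $\rho_{2}$ in the eigenbasis of $H_{2}$ and observe that the off-diagonal pieces cancel in the anticommutator while the diagonal piece doubles. Your presentation via the commute/anticommute split is slightly more streamlined than the paper's explicit term-by-term expansion, but the content is identical.
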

\begin{proof}
Using the completeness relation in the eigenbasis of $H_{2}$, i.e., $|+\rangle_{22}\langle+|+|-\rangle_{22}\langle-|=\mathbb{1}_{2}$, we have
\begin{equation}
\begin{split}
H_{2}\rho_{2}+\rho_{2}H_{2}&=H_{2}\mathbb{1}_{2}\rho_{2}\mathbb{1}_{2}+\mathbb{1}_{2}\rho_{2}\mathbb{1}_{2}H_{2}
\\ &
=\nu_{2}(|+\rangle_{22}\langle+|-|-\rangle_{22}\langle-|)({}_{2}\langle +|\rho_{2}|+\rangle_{2} |+\rangle_{22}\langle+|+{}_{2}\langle +|\rho_{2}|-\rangle_{2} |+\rangle_{22}\langle-|+{}_{2}\langle -|\rho_{2}|+\rangle_{2} |-\rangle_{22}\langle+|+{}_{2}\langle -|\rho_{2}|-\rangle_{2} |-\rangle_{22}\langle-|)
\\ &
+({}_{2}\langle +|\rho_{2}|+\rangle_{2} |+\rangle_{22}\langle+|+{}_{2}\langle +|\rho_{2}|-\rangle_{2} |+\rangle_{22}\langle-|+{}_{2}\langle -|\rho_{2}|+\rangle_{2} |-\rangle_{22}\langle+|+{}_{2}\langle -|\rho_{2}|-\rangle_{2} |-\rangle_{22}\langle-|)\nu_{2}(|+\rangle_{22}\langle+|-|-\rangle_{22}\langle-|)
\\ &
=\nu_{2}({}_{2}\langle +|\rho_{2}|+\rangle_{2} |+\rangle_{22}\langle+|+{}_{2}\langle +|\rho_{2}|-\rangle_{2} |+\rangle_{22}\langle-|-{}_{2}\langle -|\rho_{2}|+\rangle_{2} |-\rangle_{22}\langle+|-{}_{2}\langle -|\rho_{2}|-\rangle_{2} |-\rangle_{22}\langle-|)
\\ &
+\nu_{2}({}_{2}\langle +|\rho_{2}|+\rangle_{2} |+\rangle_{22}\langle+|-{}_{2}\langle +|\rho_{2}|-\rangle_{2} |+\rangle_{22}\langle-|+{}_{2}\langle -|\rho_{2}|+\rangle_{2} |-\rangle_{22}\langle+|-{}_{2}\langle -|\rho_{2}|-\rangle_{2} |-\rangle_{22}\langle-|)
\\&
=2\nu_{2}({}_{2}\langle +|\rho_{2}|+\rangle_{2} |+\rangle_{22}\langle+|-{}_{2}\langle -|\rho_{2}|-\rangle_{2} |-\rangle_{22}\langle-|).
\
\end{split}\label{BHG}
\end{equation}
In the second line, we use the decomposition Eq. (\ref{decomposi}). This shows that the result of the antricomutator $\{H_{2},\rho_{2}\}$ is something that is diagonal in the eigenbasis of $H_{2}$. It can also be written as: $\{H_{2},\rho_{2}\}=2H_{2}\Delta_{2}(\rho_{2})$; see the diagonal part of $\rho_{2}$ in Eq. (\ref{decomposi}).
\end{proof}

The latter proved result would be important to simplify the variances of $Q_{M}$ and $W$. Using equation (\ref{BHG}), we have,
\begin{equation}
\begin{split}
\mathrm{Tr}\left[H_{2}\Phi(H_{2}\rho_{2}+\rho_{2}H_{2})\right]& =2\mathrm{Tr}\left[(\nu_{2}|+\rangle_{22}\langle+|-\nu_{2}|-\rangle_{22}\langle-|)\Phi(\nu_{2}({}_{2}\langle +|\rho_{2}|+\rangle_{2} |+\rangle_{22}\langle+|-{}_{2}\langle -|\rho_{2}|-\rangle_{2} |-\rangle_{22}\langle-|))\right]
\\ &
=2\nu_{2}^{2}\left({}_{2}\langle +|\rho_{2}|+\rangle_{2}(1-\theta)-{}_{2}\langle -|\rho_{2}|-\rangle_{2}\theta-{}_{2}\langle +|\rho_{2}|+\rangle_{2}\theta+{}_{2}\langle -|\rho_{2}|-\rangle_{2}(1-\theta)\right)
\\ &
=2\nu_{2}^{2}(1-2\theta).
\end{split}
\end{equation}
In the second line, we use Eq. (\ref{teth}), and in the third line, $\mathrm{Tr}\left[\rho_{2}\right]={}_{2}\langle +|\rho_{2}|+\rangle_{2}+{}_{2}\langle -|\rho_{2}|-\rangle_{2}=1$. Putting everything together, we obtain the next compact equation,
\begin{equation}
\mathrm{Re}[\langle Q_{M}^{2}\rangle_{c}]=4\nu_{2}^{2}\theta-\langle Q_{M}\rangle^{2}=4\nu_{2}^{2}(\theta-((\theta_{c}-\delta')\tanh(\beta\nu_{1}))^{2}).\label{resu2}
\end{equation}
The imaginary part of the second cumulant of $Q_{M}$ is given by $-2\mathrm{Im}\{\mathrm{Tr}\left[H_{2}\Phi(H_{2}\rho_{2})\right]\}$. The latter can be written as follows:
\begin{equation}
\mathrm{Im}[\langle Q_{M}^{2}\rangle_{c}] = -2\mathrm{Im}\left\{\mathrm{Tr}\left[H_{2}\Phi(H_{2}\rho_{2})\right]\right\} = i\mathrm{Tr}\left[H_{2}\Phi\left( [H_{2},\rho_{2}]\right)\right].
\end{equation}
We see that when the commutator $[H_{2},\rho_{2}] = 0$ (i.e., the state $\rho_{2}$ is incoherent in the eigenbasis of $H_{2}$), the imaginary part of the second cumulant of $Q_{M}$ vanishes. Thus, the state $\rho_{2}$ has zero off-diagonal elements in the eigenbasis of $H_{2}$.

Similarly to the variance of $Q_{C}$ and $Q_{M}$ we have,
\begin{equation}
\mathrm{Re}[\langle W^{2}\rangle_{c}]:=\mathrm{Re}\bigg\{\frac{\partial^{2} \mathrm{log} (\chi_{UDE}(\gamma_{1}=\gamma_{W},\gamma_{2}=-\gamma_{W},\gamma_{3}=\gamma_{W},\gamma_{4}=-\gamma_{W}))}{\partial (i\gamma_{W})^{2}}\bigg\rvert_{\gamma_{W}=0}
\bigg\}.
\end{equation}
And, its explicit expression is given by
\begin{equation}
\begin{split}
\mathrm{Re}[\langle W^{2}\rangle_{c}] & =\mathrm{Tr}\left[H_{1}^{2}\rho_{1}\right]+\mathrm{Tr}\left[H_{2}^{2}\rho_{2}\right]+\mathrm{Tr}\left[H_{2}^{2}\rho_{3}\right]+\mathrm{Tr}\left[H_{1}^{2}\rho_{4}\right]-2\mathrm{Tr}\left[H_{2}U(H_{1}\rho_{1})U^{\dagger}\right]+2\mathrm{Tr}\left[H_{2}\Phi(U(H_{1}\rho_{1})U^{\dagger})\right]
\\ &
-2\mathrm{Tr}\left[H_{1}V(\Phi(U(H_{1}\rho_{1})U^{\dagger}))V^{\dagger}\right]-2\mathrm{Re}\{\mathrm{Tr}\left[H_{2}\Phi(H_{2}\rho_{2})\right]\}+2\mathrm{Re}\{\mathrm{Tr}\left[H_{1}V(\Phi(H_{2}\rho_{2}))V^{\dagger}\right]\}-2\mathrm{Re}\{\mathrm{Tr}\left[H_{1}V(H_{2}\rho_{3})V^{\dagger}\right]\}
\\ &
-\langle W\rangle^{2}.\label{iop}
\end{split}
\end{equation}
The imaginary part of the second cumulant of $W$ is given by:
\begin{equation}
\begin{split}
\mathrm{Im}[\langle W^{2}\rangle_{c}] & = -2\mathrm{Im}\{\mathrm{Tr}\left[H_{2}\Phi(H_{2}\rho_{2})\right]\}+2\mathrm{Im}\{\mathrm{Tr}\left[H_{1}V(\Phi(H_{2}\rho_{2}))V^{\dagger}\right]\}-2\mathrm{Im}\{\mathrm{Tr}\left[H_{1}V(H_{2}\rho_{3})V^{\dagger}\right]\} \\
& = i\mathrm{Tr}\left[H_{2}\Phi\left( [H_{2},\rho_{2}]\right)\right]
-i\mathrm{Tr}\left[H_{1}V(\Phi\left( [H_{2},\rho_{2}]\right))V^{\dagger}\right]
+i\mathrm{Tr}\left[H_{1}V( [H_{2},\rho_{3}])V^{\dagger}\right].
\end{split}
\end{equation}
When the commutator $[H_{2},\rho_{2}] = [H_{2},\rho_{3}] = 0$ (i.e., the states $\rho_{2}$ and $\rho_{3}$ have no coherence in the eigenbasis of $H_{2}$), the imaginary term of the second cumulant of $W$ is zero.

We see from the second cumulants of $Q_{M}$ and $W$ that the imaginary part comes from the average of the next products: $E_{2}E_{3}$, $E_{2}E_{4}$, and $E_{3}E_{4}$. However, the average of the product of $E_{1}$ with $E_{2}$, $E_{3}$, and $E_{4}$ is always real, hence the nullity of the imaginary part of the second cumulant of $Q_{C}$. Furthermore, following the same reasoning as we did for $Q_{C}$ and $Q_{M}$, i.e., by applying lemma \ref{lemma}, one can prove that the work fluctuations are given by
\begin{equation}
\mathrm{Re}[\langle W^{2}\rangle_{c}]=4\nu_{1}\nu_{2}(\delta'+\zeta-\theta_{c}-\zeta_{c})+4(\nu_{1}^{2}\zeta^{c}+\nu_{2}^{2}\theta)-\langle W\rangle^{2}.\label{resu3}
\end{equation}
Finally, note that Eqs. (\ref{qcun})-(\ref{qmun})-(\ref{wun}) and Eqs. (\ref{resu1})-(\ref{resu2})-(\ref{resu3}) are important results of the paper since they compress the expressions of the averages and the fluctuations into simple expressions.
\section{The fluctuations of the heat released to the cold bath}\label{mpmpm}
Following Ref. \cite{Abdelkader2}, one can easily see that the variance of the heat released to the cold bath is given by
\begin{equation}
var(Q_{C}') = \langle E_{1}^{2}\rangle_{c} + \langle E_{4}^{2}\rangle_{c} = \mathrm{Tr}\left[H_{1}^{2}\rho_{1}\right] - \mathrm{Tr}\left[\rho_{1}H_{1}\right]^{2} + \mathrm{Tr}\left[H_{1}^{2}V(\Phi(U\rho_{1}U^{\dagger}))V^{\dagger}\right] - \mathrm{Tr}\left[\rho_{4}H_{1}\right]^{2}.
\end{equation}
Using Eqs. (\ref{e1}--\ref{e4}) and after simple mathematical steps, one can show that
\begin{equation}
var(Q_{C}') = \nu_1^2 \left( 2 - \tanh^2(\beta \nu_1) \left[ (1 - 2\zeta^c)^2 + 1 \right] \right).
\end{equation}

\section{Lower bounds on the RFs of $Q_{C}$}\label{lowerrr}
From Eq. (\ref{qcun}) and Eq. (\ref{resu1}), we have,
\begin{equation}
\langle \Sigma\rangle\left(\frac{\langle Q_{C}^{2}\rangle_{c}}{\langle Q_{C}\rangle^{2}}+1\right)=-\beta\langle Q_{C}\rangle\left(\frac{\langle Q_{C}^{2}\rangle_{c}}{\langle Q_{C}\rangle^{2}}+1\right)=2\beta\nu_{1}\coth(\beta\nu_{1})\geq2.\label{yy}
\end{equation}
The lower bound is achieved when $\beta=0$, i.e., \textit{in the high-temperature regime}. From equation (\ref{yy}), we see that
\begin{equation}
\frac{\langle Q_{C}^{2}\rangle_{c}}{\langle Q_{C}\rangle^{2}}\geq\frac{2\beta\nu_{1}\coth(\beta\nu_{1}}{\langle \Sigma\rangle}-1\geq\frac{2}{\langle \Sigma\rangle}-1.
\end{equation}
Furthermore, one can also show that when $\zeta^{c}\leq1/2$, it follows that
\begin{equation}
\langle Q_{C}^{2}\rangle_{c}\geq\langle Q_{C}\rangle^{2}.
\end{equation}
Consider an arbitrary unitary $U$. When $V=U^{\dagger}$ and for the unital channel of Sec. \ref{casestu}, one can show that
\begin{equation}
0\leq \zeta^{c}\leq1/2.
\end{equation}
\begin{proof}
After simple lines of algebra, one can show that $\zeta^{c}$ is given as follows:
\begin{equation}
\begin{split}
\zeta^{c} & =\sum_{j}{}_{1}\langle-|U^{\dagger}\pi_{j}U|+\rangle_{11}\langle+|U^{\dagger}\pi_{j}U|-\rangle_{1}
\\ &
=|_{1}\langle-|U^{\dagger}|\pi_{1}\rangle|^{2}|\langle\pi_{1}|U|+\rangle_{1}|^{2}+|_{1}\langle-|U^{\dagger}|\pi_{2}\rangle|^{2}|\langle\pi_{2}|U|+\rangle_{1}|^{2}.
\end{split}
\end{equation}
Where we replace $\pi_{j}$ by $|\pi_{j}\rangle\langle\pi_{j}|$ for $j=1$ and $2$. And by defining $p_{1}:=|\langle\pi_{1}|U|+\rangle_{1}|^{2}$ and $p_{2}:=|\langle\pi_{2}|U|+\rangle_{1}|^{2}$, one can find that,
\begin{equation}
\zeta^{c}=(1-p_{1})p_{1}+(1-p_{2})p_{2}.
\end{equation}
This follows from the microreversibility principle, i.e.,
\begin{equation}
\begin{split}
|_{1}\langle-|U^{\dagger}|\pi_{1}\rangle|^{2} & =|\langle\pi_{1}|U|-\rangle_{1}|^{2}
\\ &
=\langle\pi_{1}|U(\mathbb{1}_{2}-|+\rangle_{11}\langle+|)U^{\dagger}|\pi_{1}\rangle
\\ &
=1-|\langle\pi_{1}|U|+\rangle_{1}|^{2}
\\ &
=1-p_{1}.
\end{split}
\end{equation}
In the same manner, we have $|_{1}\langle-|U^{\dagger}|\pi_{2}\rangle|^{2}=1-p_{2}$. Further, one can also prove that $p_{2}=1-p_{1}$ as follows:
\begin{equation}
\begin{split}
p_{2} & =|\langle\pi_{2}|U|+\rangle_{1}|^{2}
\\ &
={}_{1}\langle+|U^{\dagger}|\pi_{2}\rangle\langle\pi_{2}|U|+\rangle_{1}
\\ &
={}_{1}\langle+|U^{\dagger}(\mathbb{1}_{2}-|\pi_{1}\rangle\langle\pi_{1}|)U|+\rangle_{1}
\\ &
=1-|\langle\pi_{1}|U|+\rangle_{1}|^{2}
\\ &
=1-p_{1}.
\end{split}
\end{equation}
Putting everything together, we obtain
\begin{equation}
\zeta^{c}=2p_{1}(1-p_{1}).
\end{equation} 
And from the fact that $0\leq p_{1}\leq1$, we conclude that
\begin{equation}
0\leq \zeta^{c}\leq1/2.
\end{equation}
The highest value is achieved when $p_{1}=1/2$. 
\end{proof}
\section{Proof of $\zeta_{c}=\theta_{c}$ and $\delta'=\zeta$, when $V=U^{\dagger}$}\label{rwqm}
For arbitary $U$ and $V$ that satisfy $V=U^{\dagger}$, and for the unital channel of Sec. \ref{casestu}, one can show that $\zeta_{c}=\theta_{c}$.
\begin{proof}
We have,
\begin{equation}
\begin{split}
\zeta_{c} & ={}_{1}\langle -|V\Phi(|+\rangle_{22}\langle +|)V^{\dagger}|-\rangle_{1}
\\ &
=\sum_{j} {}_{1}\langle -|U^{\dagger}\pi_{j}|+\rangle_{22}\langle +|\pi_{j}U|-\rangle_{1}
\\ &
=\sum_{j}{}_{2}\langle +|\pi_{j}U|-\rangle_{11}\langle -|U^{\dagger}\pi_{j}|+\rangle_{2}
\\ &
={}_{2}\langle +|\Phi(U|-\rangle_{11}\langle -|U^{\dagger})|+\rangle_{2}
\\ &
=\theta_{c}.
\end{split}\label{olpm}
\end{equation}
In the second line, we replace $\Phi$ by its expression and $V$ by $U^{\dagger}$.
\end{proof}
Now let's prove thtat when $V=U^{\dagger}$ we have $\delta'=\zeta$.
\begin{proof}
We have,
\begin{equation}
\begin{split}
\zeta & =|{}_{1}\langle +|V|-\rangle_{2}|
\\ &
=|{}_{1}\langle +|U^{\dagger}|-\rangle_{2}|
\\ &
={}_{2}\langle-|U|+\rangle_{11}\langle +|U^{\dagger}|-\rangle_{2}
\\ &
={}_{2}\langle-|U(\mathbb{1}_{2}-|-\rangle_{11}\langle -|)U^{\dagger}|-\rangle_{2}
\\ &
=1-{}_{2}\langle-|U|-\rangle_{11}\langle -|U^{\dagger}|-\rangle_{2}
\\ &
=1-({}_{1}\langle-|U^{\dagger}(\mathbb{1}_{2}-|+\rangle_{22}\langle +|)U|-\rangle_{1})
\\ &
=1-(1-{}_{1}\langle-|U^{\dagger}|+\rangle_{22}\langle +|U|-\rangle_{1})
\\ &
=|_{2}\langle +|U|-\rangle_{1}|^{2}
\\ &
=\delta'.
\end{split}
\end{equation}
\end{proof}

\section{Proof of theorem 2 (Eq. (\ref{UppLoE}))}\label{EERF}

\begin{proof}
Using the results of Appendix \ref{rwqm}, one can show the next series of equalities;
\begin{equation}
\begin{split}
\mathrm{Re}[\langle W^{2}\rangle_{c}] & =4\nu_{1}\nu_{2}(\delta'+\zeta-\theta_{c}-\zeta_{c})+4(\nu_{1}^{2}\zeta^{c}+\nu_{2}^{2}\theta)-\langle W\rangle^{2}
\\ &
=4\nu_{1}\nu_{2}(2\delta'-2\theta_{c})+4(\nu_{1}^{2}\zeta^{c}+\nu_{2}^{2}\theta)-\langle W\rangle^{2}
\\ &
=4\nu_{1}(-2\nu_{2}(\theta_{c}-\delta')+\nu_{1}\zeta^{c})+4\nu_{2}^{2}\theta-\langle W\rangle^{2}
\\ &
=4\nu_{1}(-\coth(\beta\nu_{1})\langle Q_{M}\rangle-\coth(\beta\nu_{1})\langle Q_{C}\rangle/2)+4\nu_{2}^{2}\theta-\langle W\rangle^{2}
\\ &
=-2\nu_{1}\coth(\beta\nu_{1})(\langle Q_{M}\rangle+\langle W\rangle)+4\nu_{2}^{2}\theta-\langle W\rangle^{2}
.
\end{split}\label{mpmp}
\end{equation}
In the fourth line, we use Eqs. (\ref{qcun1})-(\ref{qmun1}). Further, now consider the difference $\mathrm{Re}[\langle Q_{M}^{2}\rangle_{c}]-\mathrm{Re}[\langle W^{2}\rangle_{c}]$. One can show the next series of equalities,
\begin{equation}
\begin{split}
\mathrm{Re}[\langle Q_{M}^{2}\rangle_{c}]-\mathrm{Re}[\langle W^{2}\rangle_{c}] & =4\nu_{2}^{2}\theta-\langle Q_{M}\rangle^{2}- (-2\nu_{1}\coth(\beta\nu_{1})(\langle Q_{M}\rangle+\langle W\rangle)+4\nu_{2}^{2}\theta-\langle W\rangle^{2})
\\ &
=2\nu_{1}\coth(\beta\nu_{1})(\langle Q_{M}\rangle+\langle W\rangle+(\langle W\rangle-\langle Q_{M}\rangle)(\langle Q_{M}\rangle+\langle W\rangle)
\\ &
=(\langle W\rangle+\langle Q_{M}\rangle)(2\nu_{1}+\langle Q_{C}\rangle\tanh(\beta\nu_{1}))\coth(\beta\nu_{1})
\\ &
=2\nu_{1}\coth(\beta\nu_{1})(\langle W\rangle+\langle Q_{M}\rangle)(1-\zeta^{c}\tanh^{2}(\beta\nu_{1})).
\end{split}
\end{equation}
Note that the latter result is always $\geq0$ in the heat engine region. Further, the term $(1-\zeta^{c}\tanh^{2}(\beta\nu_{1}))$ is always $\geq0$, since $0\leq \zeta^{c}\tanh^{2}(\beta\nu_{1})\leq1$. On the other hand, the term $\langle W\rangle+\langle Q_{M}\rangle$ is $\geq(\leq)0$ when both $\langle W\rangle$ and $\langle Q_{M}\rangle$ are $\geq(\leq)0$. Of course, $\langle W\rangle+\langle Q_{M}\rangle$ can still be $\geq0$ even when $\langle W\rangle\leq0$ is $\leq$ and $\langle Q_{M}\rangle$ $\geq0$ such that their sum is $\geq0$. From all this, we see that in the heat engine region, we have $\mathrm{Re}[\langle Q_{M}^{2}\rangle_{c}]-\mathrm{Re}[\langle W^{2}\rangle_{c}]\geq0$, thus
\begin{equation}
\frac{\mathrm{Re}[\langle W^{2}\rangle_{c}]}{\mathrm{Re}[\langle Q_{M}^{2}\rangle_{c}]}\leq1.
\end{equation}

\end{proof}
Note that they become equal at the point where efficiency goes to 1. In this case, $W$ and $Q_{M}$ converge to 0. But note that numerically, we found that fluctuations of both $W$ and $Q_{M}$ are non-zero.

\section{Difference between the relative fluctuations}\label{diffref}
Let's compute the difference between the relative fluctuations of work $W$ and heat $Q_{M}$. We have,
\begin{equation}
\begin{split}
\frac{\mathrm{Re}[\langle W^{2}\rangle_{c}]}{\langle W\rangle^{2}}-\frac{\mathrm{Re}[\langle Q_{M}^{2}\rangle_{c}]}{\langle Q_{M}\rangle^{2}} & =\frac{\mathrm{Re}[\langle W^{2}\rangle_{c}]\langle Q_{M}\rangle^{2}-\mathrm{Re}[\langle Q_{M}^{2}\rangle_{c}]\langle W\rangle^{2}}{\langle W\rangle^{2}\langle Q_{M}\rangle^{2}}
\\ &
=\frac{(-2\nu_{1}\coth(\beta\nu_{1})(\langle Q_{M}\rangle+\langle W\rangle)+4\nu_{2}^{2}\theta-\langle W\rangle^{2})\langle Q_{M}\rangle^{2}-(4\theta\nu_{2}^{2}-\langle Q_{M}\rangle^{2})\langle W\rangle^{2}}{\langle W\rangle^{2}\langle Q_{M}\rangle^{2}}
\\ &
=\frac{-2\nu_{1}\coth(\beta\nu_{1})(\langle Q_{M}\rangle+\langle W\rangle)\langle Q_{M}\rangle^{2}+4\theta\nu_{2}^{2}(\langle Q_{M}\rangle^{2}-\langle W\rangle^{2})}{\langle W\rangle^{2}\langle Q_{M}\rangle^{2}}
\\ &
=\frac{-2\nu_{1}\coth(\beta\nu_{1})(\langle Q_{M}\rangle+\langle W\rangle)\langle Q_{M}\rangle^{2}+4\theta\nu_{2}^{2}(\langle Q_{M}\rangle-\langle W\rangle)(\langle Q_{M}\rangle+\langle W\rangle)}{\langle W\rangle^{2}\langle Q_{M}\rangle^{2}}
\\ &
=\frac{(\langle Q_{M}\rangle+\langle W\rangle)(-2\nu_{1}\coth(\beta\nu_{1})\langle Q_{M}\rangle^{2}-4\theta\nu_{2}^{2}\langle Q_{C}\rangle)}{\langle W\rangle^{2}\langle Q_{M}\rangle^{2}}
\\ &
=\frac{(\langle Q_{M}\rangle+\langle W\rangle)(8\theta\zeta^{c}\nu_{1}\nu_{2}^{2}\tanh(\beta\nu_{1})-8\nu_{1}\nu_{2}^{2}\tanh(\beta\nu_{1})(\theta_{c}-\delta')^{2})}{\langle W\rangle^{2}\langle Q_{M}\rangle^{2}}
\\ &
=\frac{8\nu_{1}\nu_{2}^{2}\tanh(\beta\nu_{1})(\langle Q_{M}\rangle+\langle W\rangle)(\theta\zeta^{c}-(\theta_{c}-\delta')^{2})}{\langle W\rangle^{2}\langle Q_{M}\rangle^{2}}.
\end{split}
\end{equation}
In the second line, we use Eqs. (\ref{resu22})-(\ref{mpmp}). In the sixth line, we use Eqs. (\ref{qcun1})-(\ref{qmun1}). Similarly to those steps of calculus, one can show that, 
\begin{equation}
\frac{\mathrm{Re}[\langle W^{2}\rangle_{c}]}{\langle W\rangle^{2}}-\frac{\langle Q_{C}^{2}\rangle_{c}}{\langle Q_{C}\rangle^{2}}=\frac{(4\nu_{1}\nu_{2}\tanh(\beta\nu_{1}))^{2}(\theta\zeta^{c}-(\theta_{c}-\delta')^{2})}{\langle W\rangle^{2}\langle Q_{C}\rangle^{2}}.
\end{equation}

\end{document}